\newcommand{\argmax}[1]{{\operatorname{arg}\,\max_{#1}}\,}
\newcommand{\argmin}[1]{{\operatorname{arg}\,\min_{#1}}\,}
\newtheorem{theorem}{Theorem}
\newtheorem{remark}{Remark}
\begin{document}

\title{On CSI-free Multi-Antenna Schemes for Massive RF Wireless Energy Transfer}
\author{
	\IEEEauthorblockN{Onel L. A. L\'opez, 
		Samuel Montejo-S\'anchez,
		Richard D. Souza, 
		Constantinos B. Papadias,
		Hirley Alves
	}
	\thanks{Onel L. A. L\'opez and Hirley Alves are with the Centre for Wireless Communications (CWC), University of Oulu, Finland. \{onel.alcarazlopez,hirley.alves\}@oulu.fi}
	\thanks{S. Montejo-S\'anchez is with Programa Institucional de Fomento a la I+D+i, Universidad Tecnol\'ogica Metropolitana, Santiago, Chile. \{smontejo@utem.cl\}.}
	\thanks{Richard D. Souza is with Federal University of Santa Catarina (UFSC), Florian\'opolis, Brazil. \{richard.demo@ufsc.br\}.}
	\thanks{Constantinos B. Papadias is with the Research, Technology \& Innovation Network (RTIN) at The American College of Greece, Athens 15342, Greece. \{cpapadias@acg.edu\}.}
	\thanks{This work is supported by Academy of Finland (Aka) (Grants n.307492, n.318927 (6Genesis Flagship), n.319008 (EE-IoT)), as well as Finnish Foundation for Technology Promotion, FIREMAN (Grant n.326301), FONDECYT Postdoctoral Grant n.3170021 in Chile, the National Council for Scientific and Technological Development (CNPq) and by project Print CAPES-UFSC ``Automation 4.0'' in Brazil.}
	\thanks{\textcopyright 2020 IEEE. Accepted for publication in the Internet of Things Journal. Personal use of this material is permitted. Permission from IEEE must be obtained for all other uses, in any current or future media, including reprinting/republishing this material for advertising or promotional purposes, creating new collective works, for resale or redistribution to servers or lists, or reuse of any copyrighted component of this work in other works.}
}
					
\maketitle

\begin{abstract}
Radio Frequency Wireless Energy Transfer (RF-WET) is emerging as a potential green enabler for massive Internet of Things (IoT). Herein, we analyze Channel State Information (CSI)-free multi-antenna strategies for powering wirelessly a large set of single-antenna IoT devices. The CSI-free schemes are $\mathrm{AA-SS}$ ($\mathrm{AA-IS}$), where all antennas transmit the same (independent) signal(s), and $\mathrm{SA}$, where just one antenna transmits at a time such that all  antennas are utilized during the coherence block.
We characterize the distribution of the provided energy under correlated Rician fading for each scheme and find out that while $\mathrm{AA-IS}$ and $\mathrm{SA}$ cannot take advantage of the multiple antennas to improve the average provided energy, its dispersion can be significantly reduced. 
Meanwhile, $\mathrm{AA-SS}$ provides the greatest average energy, but also the greatest energy dispersion, and the gains depend critically on the mean phase shifts between the antenna elements.
We find that consecutive antennas must be $\pi$ phase-shifted for optimum average energy performance under $\mathrm{AA-SS}$.
Our numerical results evidence that correlation is beneficial under $\mathrm{AA-SS}$, while a greater line of sight (LOS) and/or number of antennas is not always beneficial under such scheme. 	 
Meanwhile, both $\mathrm{AA-IS}$ and $\mathrm{SA}$ schemes benefit from small correlation, large LOS and/or large number of antennas.
Finally, $\mathrm{AA-SS}$ ($\mathrm{SA}$ and $\mathrm{AA-IS}$) is (are) preferable when devices are (are not) clustered in specific spatial directions.
\end{abstract}
\begin{IEEEkeywords}
	massive RF-WET, multiple antennas, IoT, CSI-free, Rician fading, phase shifts
\end{IEEEkeywords}
\vspace{-2mm}
\section{Introduction}
Internet of Things (IoT), where every \textit{thing} is practically transformed into an information source, represents a major technology trend that is  revolutionizing the way we interact with our surrounding environment so that we can make the most of it. There are two general categories of IoT use cases \cite{Shirvanimoghaddam.2017}: i) critical IoT, with stringent requirements on reliability, availability, and low latency, e.g. in remote health care, traffic safety and control, industrial applications and control, remote manufacturing, surgery; and ii) massive IoT, where  sensors typically report to the cloud, and the requirement is for low-cost devices with low energy consumption and good coverage. 
In such massive deployments, these IoT nodes are not generally supposed to be transmitting continuously, but they do require to operate for long periods of time without batteries replacement, especially since many of them could be placed in  hazardous environments, building structures or the human body.
One important enabler under consideration is Wireless Energy Transfer (WET). 
Notice that WET is a general concept that includes several power transfer technologies such as those
employing for instance ultra-sound, inductive, capacitive, or resonant coupling \cite{Wang.2020}. In this work we focus on Radio Frequency (RF)-based WET which allows the IoT nodes equipped with an energy harvesting (EH) circuitry  to harvest energy from incoming RF signals \cite{Chae.2018,Hou.2018,Saad.2019,Lopez.2020}. Hereinafter, we refer to RF-WET just as WET.

WET holds vast potential for replacing batteries or increasing their lifespans. 
In fact, RF-EH devices can become self-sustaining  with respect to the energy required for operation, thereby  obtaining an unlimited operating lifespan while demanding negligible maintenance \cite{Tran.2017}. This is crucial for the future society since the battery waste processing is already a critical problem. The most effective approach for reducing battery waste is to avoid using them, for which WET is an attractive clean solution. 
Notice that with the continuous advances on circuitry technology aiming at reducing further the power consumption of low-cost devices\footnote{Such current technological advances point to power consumption in the order of nW and $\mu$W (or even less) in sleep and active modes, respectively.
Nowadays, there is available a wide-range of IoT devices that seem suitable for relying on an RF-EH module as the main power supply due to their extremely low power consumption profiles. For instance, accelerometer ADXL362 ($\sim 40\mu$W in active) and Light ISL29033 ($\sim 100\mu$W in active) \cite{Jayakumar.2014}. In fact, the first wireless battery-free bio-signal processing system on chip was introduced by \cite{Zhang.2013} and it is able of monitoring various bio-signals via electrocardiogram, electromyogram, and electroencephalogram. The total size of the chip is 8.25$\mathrm{mm}^2$ and consumes 19$\mu$W to  measure the heart rate.}, 
e.g. advances on Complementary Metal-Oxide-Semiconductor (CMOS) and Micro-Electro-Mechanical Systems (MEMS) technologies \cite{Zhang.2018}, energy-efficient Proportional to the Absolute Temperature (PTAT) circuit implementations \cite{Alazzawi.2019}, and printed sensors technology \cite{Moreno.2019}, an exponential growth on WET-enabled IoT applications is expected.

The IoT paradigm intrinsically includes wireless information transfer (WIT), thus
WET appears naturally combined with WIT. In such case two main architectures can be distinguished in literature: i) Wireless Powered Communication Network (WPCN), where WET occurs in the downlink in a first phase and WIT takes place in the second phase; and ii) Simultaneous Wireless Information and Power Transfer (SWIPT), where WET and WIT occur simultaneously. 
Readers can refer to \cite{Clerckx.2019} to review the recent progress on both architectures, while herein the discussions will focus merely on WPCN and pure WET setups. Notice that in most of practical applications WET duration would be significantly larger than WIT in order to harvest usable amounts of energy \cite{Lopez.2020}. Actually, some use cases require operating under WET almost permanently while WIT happens sporadically, e.g. due to event-driven traffic. Therefore, enabling efficient WET is mandatory for realizing the IoT paradigm and constitutes the scope of this work.
\vspace{-2mm}
\subsection{Related Work}\label{RW}
Recent works have specifically considered  WET and WPCN setups in different contexts and scenarios. 
Key networking structures and performance enhancing techniques to build an efficient WPCN are discussed in \cite{Bi.2016}, where authors also point out  challenging research directions. 
Departing from the simple Harvest-then-Transmit (HTT) scheme \cite{Ju.2014,Lopez.2017,LopezAlves.2019} several other protocols have been proposed over the past few years to boost the WPCN performance such as the Harvest-then-Cooperate (HTC) system studied in  \cite{ChenLi.2015,Lopez.2018,LopezDemo.2017} and the power control scheme relying on energy accumulation between transmission rounds discussed in \cite{LopezE.2018}.
Authors either analyze the performance of the information transmission phase, or optimize it by using power control or cooperative schemes. Some scheduling strategies that allow a direct optimization of the energy efficiency of the network are also proposed in \cite{Bacinoglu.2018,Yang.2018}.
Additionally, an energy cooperation scheme that enables energy cooperation in battery-free wireless networks with WET is presented in \cite{Li.2018}.
Meanwhile, the deployment of single-antenna power beacons (PBs) for powering the mobiles in an uplink cellular network is proposed in \cite{Huang.2014}. Therein, authors investigate the network performance under an outage constraint on data links using stochastic-geometry tools, while they corroborate the effectiveness of relying on directed WET instead of using isotropic antennas.

Yet, shifts in the system architecture and in the resource allocation strategies for optimizing the energy supply to massive IoT deployments are still required. In \cite{Lopez.2020} we discuss several techniques that seem suitable
for enabling WET as an efficient solution for powering the future IoT networks. They are:
\begin{itemize}
	\item \textbf{Energy beamforming (EB)}, which allows the energy signals at different antennas to be carefully weighted to achieve constructive superposition at intended receivers. The larger the number of antennas installed at the PB,
	the sharper the energy beams can be generated in some particular spatial directions. The EB benefits for WPCNs have been investigated for instance in \cite{Huang.2016} in terms of average throughput performance, while in \cite{Park.2017} authors propose an EB scheme that maximizes the weighted sum of the harvested energy and 	the information rate in a multiple-input single-output (MISO) system. However, the benefits of EB in practice depend on the available
	Channel State Information (CSI) at the transmitter, and although there has been some works proposing adequate channel acquisition methods, e.g. \cite{Zeng.2015,ZengZhang.2015}, this still constitutes a serious limitation. This is due to the harsh requirements in terms of energy and scheduling policies, which become even more critical as the number of EH devices increases;
	\item \textbf{Distributed Antenna Systems (DAS)}, which are capable of eliminating blind spots while homogenizing the energy provided to a given area and supporting ubiquitous energy accessibility. The placement optimization of single-antenna energy and information access points in WPCNs is investigated in \cite{BiZhang.2016}, where authors focus on minimizing the network deployment cost subject to energy harvesting and communication performance constraints. On the other hand,  authors in \cite{Chen.2017} study the probability density function (PDF), the cumulative distribution function (CDF), and the average of the energy harvested in DAS, while they determine appropriate strategies when operating under different channel conditions by using such information. Although works in this regard have avoided the use of multiple transmit antennas, we would like to highlight the fact that multiple separate PBs, each equipped with multiple transmit antennas, could alleviate the issue of CSI acquisition when 	forming efficient energy beams in multiple-users setups, since each PB may be responsible for the CSI acquisition procedure of a smaller set of EH devices;
	\item \textbf{CSI-limited/CSI-free schemes}. Even without accounting for the considerable energy resources demanded by CSI acquisition, the performance of CSI-based systems decays quickly as the number of served devices increases. Therefore, in massive deployment scenarios the broadcast nature of wireless transmissions should be intelligently exploited for powering simultaneously a massive number of IoT devices  with minimum or non CSI. For instance, authors of \cite{Clerckx.2018} propose a method that relies on multiple dumb antennas transmitting phase-shifted signals to induce fast fluctuations on a slow-fading wireless channel and attain transmit diversity. Also,  we recently analyzed in \cite{Lopez.2019} several CSI-free multi-antenna schemes that a PB may utilize to efficiently power  a large set of nearby EH devices, while we discussed their performance in Rician correlated fading channels. We found out that i) the switching antenna ($\mathrm{SA}$) strategy, where a single antenna  transmits at a time with full power, provides the most predictable energy source, and it is particularly suitable for powering sensor nodes with highly sensitive EH hardware operating under non line of sight (NLOS); and ii) transmitting simultaneously the same signal with equal power in all antennas ($\mathrm{AA}$, but herein referred as $\mathrm{AA-SS}$) is the most beneficial scheme when LOS increases and it is the only scheme that benefits from spatial correlation. 
	Notice that $\mathrm{SA}$ and $\mathrm{AA-SS}$ are respectively special cases of \textit{unitary} and \textit{uniform query} schemes  proposed in the context of backscattering communications \cite{He.2015}. In fact, the authors of \cite{He.2015,He.2016} show that \textit{unitary query} can provide considerable performance gains with respect to \textit{uniform query}. However, such analyses do not consider the impact of the antenna array architecture on the LOS component of the channels, which herein we show to be significant. 
    Finally, the performance analyses conducted in \cite{Lopez.2019} were under the idealistic assumption of channels sharing the same mean phase.
\end{itemize} 
\subsection{Contributions and Organization of the Paper}\label{Cont}
This paper builds on CSI-free WET with multiple transmit antennas to power efficiently a large set of IoT devices. Different from our early work in 
\cite{Lopez.2019}, herein we do consider the mean phase shifts between antenna elements, which is a practical and unavoidable phenomenon. Based on such modeling we arrive to conclusions that are similar in some cases but different in others to those in \cite{Lopez.2019}. Specifically, the main contributions of this work can be listed as follows:
\begin{itemize}
	\item We analyze the CSI-free multi-antenna strategies studied in \cite{Lopez.2019}, e.g. $\mathrm{AA-SS}$ and $\mathrm{SA}$, in addition to the \textit{All Antennas transmitting Independent Signals} ($\mathrm{AA-IS}$) scheme, but under shifted mean phase  channels.  
	We do not consider any other information related to devices such as topological deployment, battery charge; although, such information could be crucial in
	some setups. Our derivations are specifically relevant for scenarios where it is difficult and/or not worth obtaining such information, e.g. when powering a massive number of low-power EH devices   with null/limited feedback to the PB; 
	\item By considering the non-linearity of the EH receiver we demonstrated that those devices far from the PB and more likely to operate near their sensitivity level, benefit more from the $\mathrm{SA}$ scheme than from $\mathrm{AA-IS}$. However, those closer to the PB and more likely to operate near saturation, benefit more from $\mathrm{AA-IS}$; 
	\item We attain the distribution and some main statistics of the RF energy at the EH receiver in correlated Rician fading channels under each WET scheme.  Notice that the Rician fading assumption is general enough to include a 	class of channels, ranging from Rayleigh fading channel 	without LOS to a fully deterministic LOS 	channel, by varying the Rician factor $\kappa$;	
	\item While $\mathrm{AA-IS}$ and $\mathrm{SA}$ cannot take advantage of the multiple antennas to improve the average statistics of the incident RF power,
	the energy dispersion can be significantly reduced, thus reducing the chances of energy outage. Meanwhile, the gains attained by $\mathrm{AA-SS}$ in terms of average RF energy delivery depend critically on the mean phase shifts between the antenna elements. In that regard, we show the considerable performance gaps between the idealistic $\mathrm{AA-SS}$ analyzed in \cite{Lopez.2019} and this scheme when considering channels with different mean phases. Even under such performance degradation, $\mathrm{AA-SS}$ still provides the greatest average harvested energy when compared to $\mathrm{AA-IS}$ and $\mathrm{SA}$ but its associated energy outage probability is generally the worst;
	\item We attained the optimum preventive phase shifting for maximizing the average energy delivery or minimizing its dispersion for each of the schemes. We found that when transmitting the same signal simultaneously over all the antennas ($\mathrm{AA-SS}$, or equivalently $\mathrm{AA}$ in \cite{Lopez.2019}), consecutive antennas must be $\pi$ phase-shifted for optimum performance. Meanwhile, under different schemes there is no need of carrying out any preventive phase shifting. Notice that all the analyzed CSI-free schemes, in their non-optimized form, say without proactive phase shifting, are special cases of the \textit{uniform} ($\mathrm{AA-SS}$) and \textit{unitary query} schemes ($\mathrm{AA-IS}$ and $\mathrm{SA}$) in the context of backscattering communications \cite{He.2015}, however, they  differ conceptually (and also in terms of performance in case of $\mathrm{AA-SS}$) when phase shifts are applied to the signals; 
	\item Our numerical results corroborate that correlation is beneficial under $\mathrm{AA-SS}$, especially under poor LOS where channels are more random. A very counter-intuitive result is that a greater LOS and/or number of antennas is not always beneficial when transmitting the same signal simultaneously through all antennas. 	 
	Meanwhile, since correlation (LOS and number of antennas) is well-known to decrease (increase) the diversity, $\mathrm{AA-IS}$ and $\mathrm{SA}$ schemes are affected by (benefited from) an increasing correlation (LOS factor and number of antennas);
	\item While most of the analytical derivations are obtained under the assumption that devices' positioning information is not available and/or they are uniformly distributed in the area, we show how the analyzed schemes can still be efficiently utilized to fairly power massive deployments when such assumptions do not hold. For instance, $\mathrm{AA-SS}$ ($\mathrm{SA}$ and $\mathrm{AA-IS}$) is (are) preferable when devices are (are not) clustered in specific spatial directions. 
\end{itemize}

Next, Section~\ref{system} presents the system model, while Section~\ref{csi-free}
presents and discusses the CSI-free WET strategies. Their performance under Rician fading is investigated in Sections~\ref{sAASS} and \ref{AA-IS}, while Section\ref{results} presents numerical results. Finally, Section~\ref{conclusions}
concludes the paper.
\newline
\textbf{Notation:} Boldface lowercase letters denote column vectors, while boldface uppercase letters denote matrices. For instance, $\mathbf{x}=\{x_i\}$, where $x_i$ is the $i$-th element of vector $\mathbf{x}$; while $\mathbf{X}=\{X_{i,j}\}$, where $X_{i,j}$ is the $i$-th row $j$-th column element of matrix $\mathbf{X}$. By $\mathbf{I}$ we denote the identity matrix, and by $\mathbf{1}$ we denote a vector of ones. Superscripts $(\cdot)^T$ and $(\cdot)^H$ denote the transpose and conjugate transpose operations, while $\mathrm{det}(\cdot)$ is the determinant, and by $\mathrm{diag}(x_1,x_2,\cdots,x_n)$ we denote the diagonal matrix with elements $x_1,x_2,\cdots x_n$. $\mathcal{C}$ and $\mathcal{R}$ are the set of complex and real numbers, respectively; while $\mathbbm{i}=\sqrt{-1}$ is the imaginary unit. Additionally, $|\cdot|$ and $\mathrm{mod}(a, b)$ are the absolute and modulo operations, respectively, while $||\mathbf{x}||$ denotes the euclidean norm of $\mathbf{x}$. $\mathbb{E}[\!\ \cdot\ \!]$ and $\mathrm{var}[\!\ \cdot\ \!]$ denote expectation and variance, respectively, while $\Pr[A]$ is the probability of event $A$. $\mathbf{v}\sim\mathcal{N}(\bm{\mu},\mathbf{R})$ and $\mathbf{w}\sim\mathcal{CN}(\bm{\mu},\mathbf{R})$ are a Gaussian real random vector and a circularly-symmetric Gaussian complex random vector, respectively, with mean vector $\bm{\mu}$ and covariance matrix $\mathbf{R}$. Additionally, $p_Y(y)$ denotes the PDF of random variable (RV) $Y$, while $Z\sim\chi^2(m,n)$ is a non-central chi-squared RV with $m$ degrees of freedom and parameter $n$. Then, according to \cite[Eq.(2-1-125)]{Proakis.2001} the first two central moments are given by
$\mathbb{E}[Z]=m+n$, and
$\mathrm{var}[Z]=2(m+2n)$. 
Finally, $J_0(\cdot)$ denotes the Bessel function of first kind and order $0$ \cite[\S 10.2]{Thompson.2011}.
\section{System model}\label{system}
Consider the scenario in which a PB equipped with $M$ antennas powers wirelessly a large set $\{S_i\}$ of single-antenna sensor nodes located nearby. 
Since this work deals only with CSI-free WET schemes, and for such scenarios the characterization of one sensor is representative of the overall performance, we focus our attention to the case of a generic node $S$. The fading channel coefficient between the $j-$th PB's antenna and $S$ is denoted as $h_{j}\in\mathcal{C},\ j\in\{
1,\cdots,M\}$, while $\mathbf{h}\in\mathcal{C}^{M\times 1}$ is a vector with all the antennas' channel coefficients.
\subsection{Channel model}\label{sysA}
Quasi-static channels are assumed, where the fading process is considered to be constant over the transmission of a block  and independent and identically distributed (i.i.d) from block to block.
Without loss of generality we set the duration of a block to 1 so the terms energy and power can be indistinctly used.
Specifically, we consider channels undergoing Rician fading, which is a very general assumption that allows  modeling a wide variety of channels by tuning the Rician factor $\kappa\ge 0$ \cite[Ch.2]{Proakis.2001}, e.g. when $\kappa=0$ the channel envelope is Rayleigh distributed, while when $\kappa\rightarrow\infty$ there is a fully deterministic LOS channel. Therefore,
\begin{align}
\mathbf{h}=\sqrt{\frac{\kappa}{1+\kappa}}e^{\mathbbm{i}\varphi_0}\mathbf{h}_{\mathrm{los}}+\sqrt{\frac{1}{1+\kappa}}\mathbf{h}_{\mathrm{nlos}}\label{channel}
\end{align}
is the normalized channel vector \cite[Ch.5]{Hampton.2013}, where 
$\mathbf{h}_\mathrm{los}=[1, e^{\mathbbm{i}\Phi_1},\cdots,e^{\mathbbm{i}\Phi_{M-1}}]^T$
is the deterministic LOS propagation component such that $\Phi_t,\ t\in\{1,\cdots,M-1\}$  is the mean phase shift of the $(t+1)-$th array element with respect to the first antenna. 
 Additionally, $\varphi_0$ accounts for an initial phase shift, while $\mathbf{h}_\mathrm{nlos}\sim \mathcal{CN}(\mathbf{0},\mathbf{R})$ represents the scattering (Rayleigh) component.
We assume a real covariance matrix $\mathbf{R}\in\mathcal{R}^{M\times M}$ for gaining in analytical tractability, which means that real and imaginary parts of $\mathbf{h}_\mathrm{nlos}$ are i.i.d  and also with covariance $\mathbf{R}$ \cite{Schreier.2010}. 
Assume half-wavelength equally-spaced antenna elements, e.g. as in a uniform linear array (ULA), yielding 
\begin{align}\label{Phi}
\Phi_t&=-t\pi\sin\phi,
\end{align}
where $\phi\in[0,2\pi]$ is  the azimuth angle relative to the boresight of the transmitting antenna array. 
Such angle depends on both transmit and receive local conditions, e.g. antenna orientation, node's location, and consequently it is different for each sensor $S$. 
Additionally, let use denote by $\beta$ the average RF power available at $S$ if the PB transmits with full power over a single antenna. Under such single-antenna setup, the available RF energy at the input of the EH circuitry is given by $\beta|h_{i^*}|^2$, where $i^*$ is the index of the active antenna. 
Notice that $\beta$ includes the effect of both path loss and transmit power.  
\subsection{Preventive adjustment of mean phases}
As mentioned earlier, $\phi$ and consequently $\bm{\Phi}$, are different for each sensor, which prevents us from making any preventive phase adjustment based on an specific $\phi$. However, maybe we could still use the  topological information embedded in \eqref{Phi}, which tell us that $\Phi_t$ increases with $t$ for a given $\phi$, to improve the statistics of the harvested energy. To explore this, let us consider that the PB applies a preventive adjustment of the signal phase at the $(t+1)-$th array element given by $\psi_t\in[0,2\pi]$, while without loss of generality we set $\psi_0=0$. Then, the equivalent normalized channel vector seen at certain sensor $S$ becomes $\mathbf{h}^*=\Psi \mathbf{h}$, where  
	\begin{align}
	\Psi&=\mathrm{diag}(1,e^{\mathbbm{i}\psi_1},\cdots,e^{\mathbbm{i}\psi_{M-1}}).
	\end{align}
	Now, departing from \eqref{channel} we have that
	\begin{align}\label{channel2}
	\mathbf{h}^*&=\sqrt{\frac{\kappa}{1+\kappa}}e^{\mathbbm{i}\varphi_0}\Psi\mathbf{h}_\mathrm{los}+\sqrt{\frac{1}{1+\kappa}}\Psi\mathbf{h}_\mathrm{nlos}\nonumber\\
	&\sim \sqrt{\frac{\kappa}{1+\kappa}}e^{\mathbbm{i}\varphi_0} \Big[1, e^{\mathbbm{i}(\Phi_1+\psi_1)},\cdots,e^{\mathbbm{i}(\Phi_{M-1}+\psi_{M-1})}\Big]^T+\nonumber\\
	&\qquad\qquad\qquad\qquad\qquad\qquad + \sqrt{\frac{1}{1+\kappa}}\mathcal{CN}(\mathbf{0,\mathbf{R}}),
	\end{align}
	where last line comes from simple algebraic operations and using the fact that $\Psi\mathbf{h}_\mathrm{nlos}\sim \mathcal{CN}(\mathbf{0},\Psi\mathbf{R}\Psi^H)\sim \mathcal{CN}(\mathbf{0},\mathbf{R})$ since $\Psi$ is diagonal with unit absolute values' entries.
	Without loss of generality, by conveniently setting $\varphi_0=\pi/4$ \cite{Khalighi.2001} so $e^{\mathbbm{i}\varphi_0}=(1+\mathbbm{i})/\sqrt{2}$ imposes the effect of LOS (constant) component on real and imaginary parts of the scattering (Rayleigh) component $\mathbf{h}_\mathrm{nlos}$, we rewrite \eqref{channel2} as $\mathbf{h}^*=\mathbf{h}_x+\mathbbm{i}\mathbf{h}_y$, where $\mathbf{h}_x$ and $\mathbf{h}_y$ are independently distributed as
		\begin{align}
		\mathbf{h}_{x,y}&\sim \sqrt{\frac{1}{2(\kappa+1)}} \mathcal{N}\big(\sqrt{\kappa}\bm{\omega}_{x,y} ,\mathbf{R}\big),\label{hxy}
		\end{align}
	where 
	$\bm{\omega}_{x,y}\!=\!\big[1,\cos(\Phi_1\!+\!\psi_1)\mp\sin(\Phi_1\!+\!\psi_1),\cdots,\cos(\Phi_{M-1}+\psi_{M-1})\mp\sin(\Phi_{M-1}+\psi_{M-1})\big]^T$.
\subsection{EH transfer function}\label{EH}
Finally, and after going through the channel, the RF energy is harvested at the receiver end. The EH circuitry is characterized by a non-decreasing function $g: \mathcal{R}^+\mapsto \mathcal{R}^+$ modeling the relation between the incident and harvested RF power\footnote{In practice $g$ also depends on the modulation and incoming waveform \cite{Clerckx.2016,Clerckx.2018}.} at $S$. 
With such a function in place, the power transfer efficiency (PTE) is given by $g(x)/x$.
In most of the works $g$ is assumed to be linear for analytical tractability, e.g. \cite{Ju.2014,Lopez.2017,LopezAlves.2019,ChenLi.2015,Lopez.2018,LopezDemo.2017,Bacinoglu.2018,Li.2018,Huang.2014,Huang.2016,Park.2017,Zeng.2015,ZengZhang.2015,BiZhang.2016}, which implies that $g(x)=\eta x$ where $\eta$ is a PTE constant, thus, independent of the input power. However, in practice the PTE actually depends on the input power and consequently, the relationship between the input power and the output power is nonlinear \cite{Clerckx.2019,Boshkovska.2015,Chen.2017,Lopez.2019}. 
In this work, we consider the following EH transfer function \cite{Boshkovska.2015}
\begin{align}
g(x)&=g_{\max} \Big(\frac{1+e^{ab}}{1+e^{-a(x-b)}}-1\Big)e^{-ab},\label{gx}
\end{align}
which is known to describe accurately the non-linearity of EH circuits by properly fitting parameters $a,b\in\mathcal{R}^+$, while $g_{\max}$ is the harvested power at saturation. 
\section{CSI-free multi-antenna WET strategies}\label{csi-free}
Herein we overview CSI-free multiple-antenna WET strategies for an efficient wireless powering, while discussing some related practicalities.
\subsection{All Antennas transmitting the Same Signal (AA-SS)}\label{S-AASS}
Under this scheme, the PB transmits the same signal $s$ simultaneously with all antennas and with equal power at each. Such scheme (in its non-optimized form) is referred to as \textit{uniform query} in the context of backscattering communications \cite{He.2015}, and as $\mathrm{AA}$  in \cite{Lopez.2019}. Herein we refer to it as $\mathrm{AA-SS}$ to explicitly highlight  its difference with respect to the $\mathrm{AA-IS}$ scheme (see next subsection). Under this scheme, the RF signal at the receiver side and ignoring the noise, whose energy is negligible for harvesting\footnote{In WET setups, the performance depends on the available energy at the input of the energy harvester, which requires to be significant, at least in the order of sub-$\mu$W in case of high sensitive EH hardware. Therefore, the noise impact is practically null, as widely recognized in the literature, e.g., \cite{Hou.2018,Clerckx.2019,Ju.2014,Lopez.2017,LopezAlves.2019,ChenLi.2015,Lopez.2018,LopezDemo.2017,LopezE.2018,Huang.2016,Park.2017,Zeng.2015,ZengZhang.2015,BiZhang.2016,Lopez.2019,Boshkovska.2015}.}, is given by $\sum_{j=1}^M \sqrt{\frac{\beta}{M}}h_j^* s$, where $s$ is normalized such that $\mathbb{E}[s^Hs]=1$. 
Then, the energy harvested by $S$ is given by 
\begin{align}
\label{aass}
\xi_{\mathrm{aa-ss}}&=g(\xi_{\mathrm{aa-ss}}^\mathrm{rf}), \, \mathrm{where}\\
\label{aassf}
\xi_{\mathrm{aa-ss}}^\mathrm{rf}&=\mathbb{E}_s\bigg[\Big(\sum_{j=1}^M \sqrt{\frac{\beta}{M}}h_j^* s\Big)^H\Big(\sum_{j=1}^M \sqrt{\frac{\beta}{M}}h_j^* s\Big)\bigg] \nonumber \\ &=\mathbb{E}_s\bigg[\Big|\sum_{j=1}^M \sqrt{\frac{\beta}{M}}h_j^*\Big|^2 s^Hs\bigg]\nonumber\\
&=\Big|\sum_{j=1}^M \sqrt{\frac{\beta}{M}}h_j^*\Big|^2 \mathbb{E}[s^Hs]=\frac{\beta}{M}\big|\mathbf{1}^T\mathbf{h}^*\big|^2
\end{align}
is the available RF energy. Notice that the terms energy and power can be used indistinctly since the block duration is normalized and the PB does not change its strategy over time.
\subsection{All Antennas transmitting Independent Signals (AA-IS)} \label{S-AAIS}
Instead of transmitting the same signal over all antennas, the PB may transmit signals $s_j$ independently generated across the antennas. This is for alleviating the issue of destructive signal combination at $s$, and constitutes a special case of the so-called \textit{unitary query} in the context of backscattering communications \cite{He.2015}. 
We refer to this scheme as AA-IS, for which the RF signal at the receiver side and ignoring the noise, whose energy is negligible for harvesting, is given by $\sum_{j=1}^{M}\sqrt{\frac{\beta}{M}}h_j^*s_j$, where each $s_j$ is normalized such that $\mathbb{E}[s_j^Hs_j]=1, \ \forall j$. Then, the harvested energy is given by 
\begin{align}
\label{aais} 
\xi_\mathrm{aa-is}&=g(\xi_\mathrm{aa-is}^\mathrm{rf}), \, \mathrm{where}\\
\label{aaisf}
\xi_\mathrm{aa-is}^\mathrm{rf}&=\mathbb{E}_s\bigg[\Big(\sum_{j=1}^M \sqrt{\frac{\beta}{M}}h_j^* s_j\Big)^H\Big(\sum_{j=1}^M \sqrt{\frac{\beta}{M}}h_j^* s_j\Big)\bigg]\nonumber\\
&=\mathbb{E}_s\bigg[\sum_{j=1}^M \Big|\sqrt{\frac{\beta}{M}}h_j^*\Big|^2 s_j^Hs_j\bigg]=\frac{\beta}{M}\sum_{j=1}^M \big|h_j^*\big|^2 \mathbb{E}[s_j^Hs_j]\nonumber\\
&=\frac{\beta}{M}||\mathbf{h}^*||^2.
\end{align}

For both, $\mathrm{AA-SS}$ and $\mathrm{AA-IS}$, the signal power over each antenna is $1/M$ of the total available transmit power. Additionally, notice that $M$ RF chains are required since all the antennas are simultaneously active. This is different from the CSI-free scheme discussed next.
\subsection{Switching Antennas (SA)}\label{SA}
Instead of transmitting with all antennas at once, the PB may transmit the (same or different) signal $s$ with full power by one
antenna at a time such that all antennas are used during a block. This is the SA scheme analyzed in \cite{Lopez.2019}, while it also constitutes a special case of the \textit{unitary query} method in backscattering communications \cite{He.2015}. In this case just one RF chain is required, hence, reducing circuit power consumption, hardware complexity and consequently the economic cost. 

Assuming equal-time allocation for each antenna, the system is equivalent to that in which each sub-block duration is $1/M$ of the total block duration, and the total harvested
energy accounts for the sum of the $M$ sub-blocks. The RF signal at the receiver side during the $j-$th sub-block, and ignoring the noise, whose energy is negligible for harvesting, is given by $\sqrt{\beta}h_j^*s$, where $s$ is normalized such that $\mathbb{E}[s^Hs]=1$, then
\begin{align}\label{sa}
\xi_\mathrm{sa}&=\frac{1}{M}\sum_{j=1}^Mg(\xi_{\mathrm{sa},j}^\mathrm{rf}), \, \mathrm{where}\\
\xi_{\mathrm{sa},j}^\mathrm{rf}&=\mathbb{E}_s\Big[\Big(\sqrt{\beta}h_j^*s\Big)^H\Big(\sqrt{\beta}h_j^*s\Big)\Big]=\mathbb{E}_s\Big[\Big|\sqrt{\beta}h_j^*\Big|^2s^Hs\Big]\nonumber\\
&=\beta |h_j^*|^2\mathbb{E}_s\big[s^Hs\big]=\beta |h_j^*|^2
\end{align}
is the incident RF power during the $j-$th sub-block.

Notice that for the simple, but commonly adopted in literature, linear EH model, both \eqref{aais} and \eqref{sa} match. 
However, in practice $g(x)$ is non-linear, and consequently \eqref{aais} may differ significantly from \eqref{sa}.
We depart from \eqref{gx} to write the second derivative of $g(x)$ as
\begin{align}
\frac{d^2}{d x^2}g(x)&=\frac{a^2e^{ax}(1+e^{ab})(e^{ab}-e^{ax})g_{\max}}{(e^{ab}+e^{ax})^3},
\end{align}
which allows us to conclude that $g$ is convex (concave) for $x\le b$ ($x\ge b$). Then, for certain channel vector realization $\mathbf{h}$ and using Jensen's inequality, we have that
\begin{align}
g\Big(\frac{\beta}{M}\sum\limits_{j=1}^{M}|h_j|^2\Big)&\left\{\begin{array}{ll}
\!\!\le \frac{1}{M}\sum\limits_{j=1}^{M}g\big(\beta|h_j|^2\big), & \mathrm{if}\ |h_j|^2\le \frac{b}{\beta}\ \forall j\\
\!\!\ge \frac{1}{M}\sum\limits_{j=1}^{M}g\big(\beta|h_j|^2\big), & \mathrm{if}\ |h_j|^2\ge \frac{b}{\beta}\ \forall j
\end{array}\right.\!\!,\nonumber\\
\xi_\mathrm{aa-is}&\left\{\begin{array}{ll}
\!\!\le \xi_\mathrm{sa}, & \mathrm{if}\ \max|h_j|^2\le \frac{b}{\beta}\\
\!\!\ge \xi_\mathrm{sa}, & \mathrm{if}\ \min|h_j|^2\ge \frac{b}{\beta}
\end{array}\right.\!\!. \label{SAeq}
\end{align}
\begin{remark}\label{re1}
	Above result implies that devices far from the PB and more likely to operate near their sensitivity level, benefit more from the $\mathrm{SA}$ scheme than from $\mathrm{AA-IS}$. However, those closer to the PB and more likely to operate near saturation, benefit more from $\mathrm{AA-IS}$.
\end{remark}

In the following we analyze the statistics of the RF energy available at $S$ for harvesting under the $\mathrm{AA-SS}$ and $\mathrm{AA-IS}$ schemes. For such schemes, the harvested energy comes from mapping the RF energy through the EH transfer function $g$, as shown in \eqref{aass} and \eqref{aais}. For $\mathrm{SA}$ the mapping is much more convoluted, especially because the available RF energy varies (although possibly correlationally) within the same coherence block as illustrated in \eqref{sa}. However, our analysis in the previous paragraphs suggests that the statistics of $\xi_\mathrm{aa-is}$ and $\xi_\mathrm{sa}$ may be approximated, which is an issue we discuss numerically in detail in Section~\ref{results}.
\section{RF available energy under AA-SS}\label{sAASS}
Next, we characterize the distribution of the RF power at $S$ under the $\mathrm{AA-SS}$ scheme.
\begin{theorem}\label{the1}
	Conditioned on the mean phase shifts of the powering signals, the distribution of the RF power at the input of the energy harvester under the $\mathrm{AA-SS}$ operation is given by
	\begin{align}
	\xi^{\mathrm{rf}}_{\mathrm{aa-ss}}\sim \frac{\beta R_{\scriptscriptstyle\sum}}{2(\kappa+1)M}\chi^2\Big(2,\frac{2\kappa f(\bm{\psi},\phi)}{R_{\scriptscriptstyle\sum}}\Big), \label{AA-SS}
	\end{align}
	where $R_{\scriptscriptstyle\sum}=\bm{1}^T\mathbf{R}\ \!\bm{1}$,
	\begin{align}
	f(\bm{\psi},\phi)&=\upsilon_1(\bm{\psi},\phi)^2+\upsilon_2(\bm{\psi},\phi)^2,\label{fv}
	\end{align}
	\begin{equation}
	\begin{array}{r@{}l}	
	\upsilon_1(\bm{\psi},\phi)&=1+\sum_{t=1}^{M-1}\cos\big(\psi_t+\Phi_t\big)\\
		\upsilon_2(\bm{\psi},\phi)&=\sum_{t=1}^{M-1}\sin\big(\psi_t+\Phi_t\big)
	\end{array},\label{v1v2}
	\end{equation}
	and $\Phi_t$ is given in \eqref{Phi} as a function of $\phi$.
\end{theorem}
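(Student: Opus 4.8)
The plan is to start from the closed form $\xi^{\mathrm{rf}}_{\mathrm{aa-ss}}=\frac{\beta}{M}|\mathbf{1}^T\mathbf{h}^*|^2$ established in \eqref{aassf} and to exploit the Gaussian decomposition $\mathbf{h}^*=\mathbf{h}_x+\mathbbm{i}\mathbf{h}_y$ from \eqref{hxy}. Setting $X=\mathbf{1}^T\mathbf{h}_x$ and $Y=\mathbf{1}^T\mathbf{h}_y$, the modulus squared collapses to $|\mathbf{1}^T\mathbf{h}^*|^2=X^2+Y^2$, so the whole task reduces to characterizing a sum of squares of two scalar real random variables. Because $\mathbf{h}_x$ and $\mathbf{h}_y$ are independent Gaussian vectors, $X$ and $Y$ are independent scalar Gaussians, and the sum of squares of two independent Gaussians is a scaled non-central chi-squared variable; this is the structural fact that delivers the claimed distribution.

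First I would extract the first two moments of $X$ and $Y$ directly from \eqref{hxy}. Since a linear functional of a Gaussian vector is Gaussian, $\mathbb{E}[X]=\sqrt{\tfrac{\kappa}{2(\kappa+1)}}\,\mathbf{1}^T\bm{\omega}_x$ and $\mathrm{var}[X]=\tfrac{1}{2(\kappa+1)}\mathbf{1}^T\mathbf{R}\,\mathbf{1}=\tfrac{R_{\scriptscriptstyle\sum}}{2(\kappa+1)}$, with analogous expressions for $Y$ in terms of $\bm{\omega}_y$. A key observation is that the two variances coincide (both equal $\sigma^2:=\tfrac{R_{\scriptscriptstyle\sum}}{2(\kappa+1)}$), which is precisely the condition ensuring that $X^2+Y^2$ is a single scalar multiple of a non-central chi-squared rather than a weighted sum of two of them. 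Using the $\mp$ structure of $\bm{\omega}_{x,y}$ together with the definitions in \eqref{v1v2}, the inner products simplify to $\mathbf{1}^T\bm{\omega}_x=\upsilon_1-\upsilon_2$ and $\mathbf{1}^T\bm{\omega}_y=\upsilon_1+\upsilon_2$, where I suppress the arguments $(\bm{\psi},\phi)$ for brevity.

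Next I would standardize by $\sigma$, so that $X/\sigma$ and $Y/\sigma$ are unit-variance Gaussians; by the moment convention recalled in the Notation paragraph, $(X/\sigma)^2+(Y/\sigma)^2\sim\chi^2(2,n)$ with non-centrality $n=\big(\mathbb{E}[X]^2+\mathbb{E}[Y]^2\big)/\sigma^2$. The only genuinely computational step is simplifying the numerator: the cross terms cancel through $(\upsilon_1-\upsilon_2)^2+(\upsilon_1+\upsilon_2)^2=2(\upsilon_1^2+\upsilon_2^2)=2f(\bm{\psi},\phi)$, which gives $\mathbb{E}[X]^2+\mathbb{E}[Y]^2=\tfrac{\kappa}{\kappa+1}f(\bm{\psi},\phi)$ and hence $n=2\kappa f(\bm{\psi},\phi)/R_{\scriptscriptstyle\sum}$. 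Undoing the scaling yields $X^2+Y^2\sim\sigma^2\chi^2(2,n)$, and multiplying by $\beta/M$ reproduces \eqref{AA-SS} exactly.

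I do not expect a serious obstacle: the argument is essentially a change of variables followed by a standard distributional identity. The only place demanding care is the bookkeeping of the $\mp$ signs in $\bm{\omega}_{x,y}$ and the cancellation of cross terms, which must land precisely on the symmetric combination $\upsilon_1^2+\upsilon_2^2$ for the non-centrality parameter to collapse to the stated form. A secondary point worth flagging is the implicit conditioning on $\phi$ (hence on the whole $\bm{\Phi}$): every step above treats the mean phases as fixed, consistent with the theorem being stated conditioned on the mean phase shifts.
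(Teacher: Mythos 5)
Your proposal is correct and follows essentially the same route as the paper's own proof: both reduce $\xi^{\mathrm{rf}}_{\mathrm{aa-ss}}$ to $(\mathbf{1}^T\mathbf{h}_x)^2+(\mathbf{1}^T\mathbf{h}_y)^2$, use that a linear functional of a Gaussian vector is Gaussian with variance $\mathbf{1}^T\mathbf{R}\,\mathbf{1}/(2(\kappa+1))$ and means $\propto \upsilon_1\mp\upsilon_2$, standardize, and invoke the non-central chi-squared definition with the cancellation $(\upsilon_1-\upsilon_2)^2+(\upsilon_1+\upsilon_2)^2=2f(\bm{\psi},\phi)$. The only cosmetic difference is that the paper names the standardized variables $\theta_{x,y}$ rather than working with $X/\sigma$, $Y/\sigma$.
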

\begin{proof}
	See Appendix~\ref{App-A}. \phantom\qedhere
\end{proof}

Notice that \eqref{AA-SS} matches \cite[Eq.(25)]{Lopez.2019} just in the specific case of un-shifted mean phases, e.g. $\bm{\psi}+\mathbf{\Phi}=\mathbf{0}$. In such case $f(\bm{\psi},\phi)$ becomes $M^2$.
 \subsection{On the impact of different phase means}
The impact of different phase means on the system performance is strictly determined by $f(\bm{\psi},\phi)$ in \eqref{AA-SS}, and can be better understood by checking the main statistics, e.g. mean and variance, of the incident RF power, which can be easily obtained from \eqref{AA-SS} by using as
\begin{align}
\mathbb{E}\big[\xi_\mathrm{aa-ss}^\mathrm{rf}\big]&=\frac{\beta R_{\scriptscriptstyle\sum}}{2(\kappa+1)M}\Big(2+\frac{2\kappa}{ R_{\scriptscriptstyle\sum}}f(\bm{\psi},\phi)\Big)\nonumber\\
&=\frac{\beta}{M(\kappa+1)}\big( R_{\scriptscriptstyle\sum}+\kappa f(\bm{\psi},\phi)\big),\label{av_aass}\\
\mathrm{var}\big[\xi_\mathrm{aa-ss}^\mathrm{rf}\big]&=\frac{\beta^2 R_{\scriptscriptstyle\sum}^2}{4(\kappa+1)^2M^2}\Big(4+\frac{8\kappa}{ R_{\scriptscriptstyle\sum}}f(\bm{\psi},\phi)\Big)\nonumber\\
&=\frac{\beta^2 R_{\scriptscriptstyle\sum}}{(\kappa+1)^2M^2}\Big( R_{\scriptscriptstyle\sum}+2\kappa f(\bm{\psi},\phi)\Big).\label{va_aass}
\end{align}
 \begin{figure}[t!]
	\centering  \includegraphics[width=0.9\columnwidth]{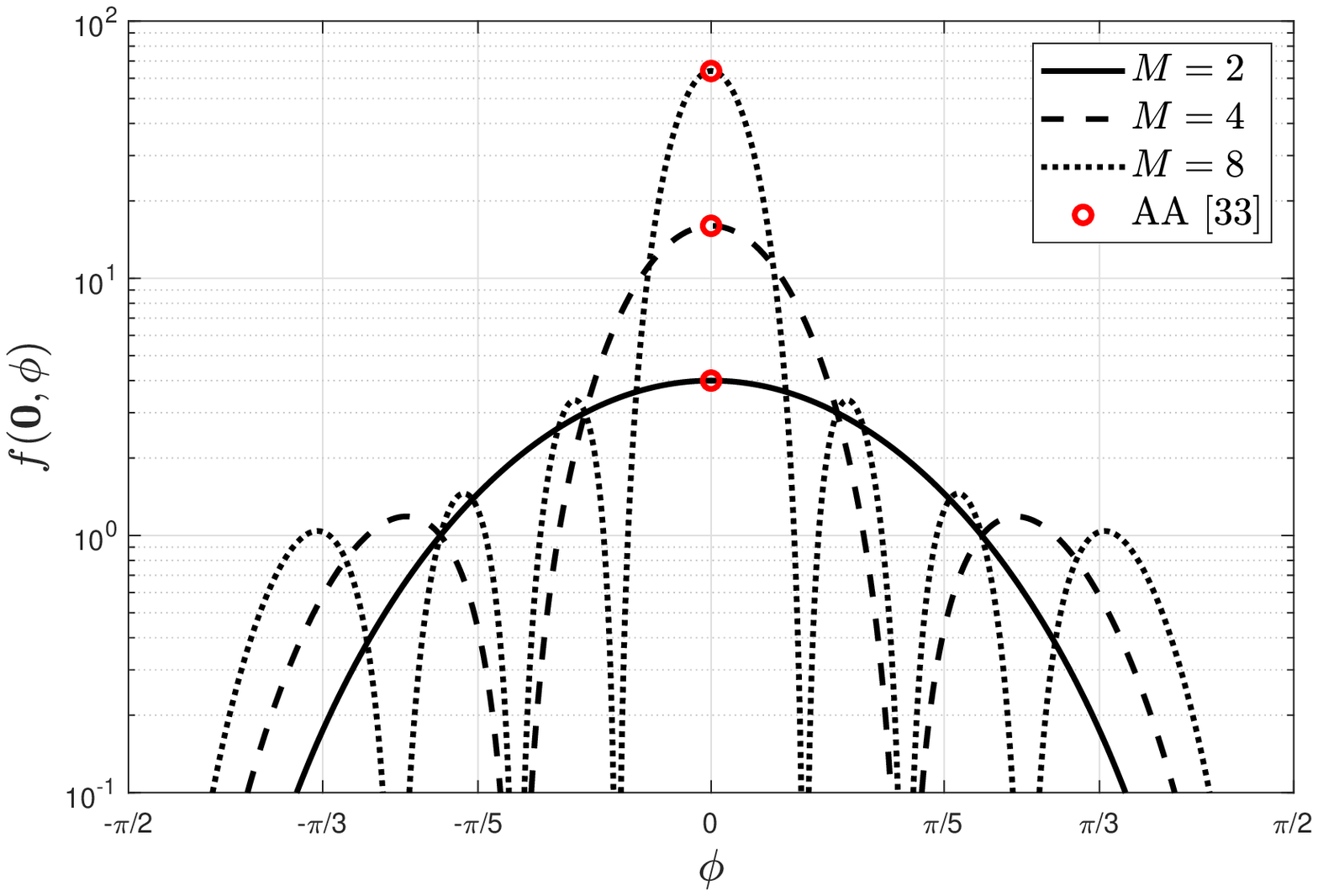}\\
	\includegraphics[width=0.9\columnwidth]{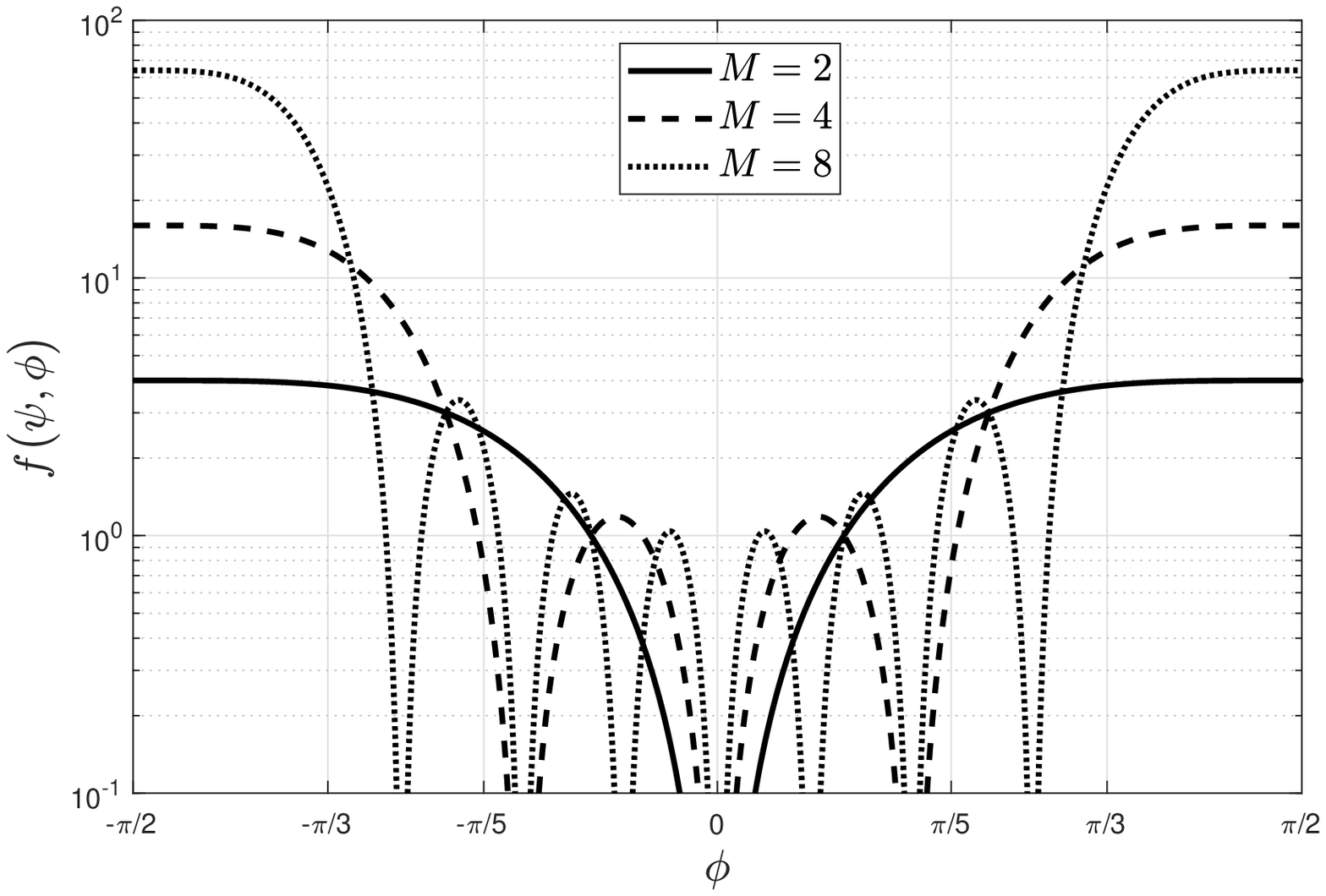}
	\caption{$a)$ $f(\bm{0},\phi)$ vs $\phi$ (top), and $b)$ $f(\bm{\psi},\phi)$ vs $\phi$ for $\bm{\psi}$ according to \eqref{psit} (bottom). We set $M\in\{2,4,8\}$.}		
	\label{Fig23}
\end{figure}
Therefore, both mean and variances increases with $f(\bm{\psi},\phi)$. Meanwhile, it is easy to check that $f(\bm{\psi},\phi)$  is maximized for $\bm{\psi}+\mathbf{\Phi}=\mathbf{0}$, for which $f(\bm{\psi},\phi)=M^2$, thus, the entire analysis carried out in \cite{Lopez.2019} on this AA-SS scheme provides upper-bounds for both the mean and variance of the harvested energy. However, different phase means cause in practice a degradation on the diversity order of $\xi^{\mathrm{rf}}$. 
Let us assume no preventive adjustment of mean phases is carried out, e.g. $\bm{\psi}=\bm{0}$, to illustrate in Fig.~\ref{Fig23}a the impact of such different channel phase means. Specifically, we show $f(\bm{0},\phi)$ for different values of $M$. Notice that the performance diverges fast from the one claimed in \cite{Lopez.2019} as $\phi$ moves away from $0$. 
\begin{remark}\label{rr2}
	The number of minima of $f(\bm{\psi},\phi)$ matches $M$, thus, as $M$ increases the chances of operating close to a minimum increase as well, which deteriorates significantly the system performance in terms of average incident RF power.
\end{remark}
\subsection{Preventive adjustment of mean phases}\label{preventive1}
Herein, we discuss on how to set the vector $\bm{\psi}$ for optimizing the system performance. 
For clarity, and using \eqref{v1v2} followed by some algebraic transformations, we rewrite \eqref{fv} as follows
\begin{align}
f(\bm{\psi},\phi)
&=\!\bigg(\!1\!+\!\sum_{t=1}^{M-1}\!\cos\big(\Phi_t\!+\!\psi_t\big)\!\bigg)^2\!\!\!+\!\bigg(\!\sum_{t=1}^{M-1}\!\sin\big(\Phi_t\!+\!\psi_t\big)\!\bigg)^2\nonumber\\
&=M+2\sum_{t=1}^{M-1}\cos(\Phi_t+\psi_t)+\nonumber\\
&\qquad +2\sum_{t=1}^{M-2}\sum_{l=t+1}^{M-1}\!\cos\big(\Phi_t+\psi_t-\Phi_l-\psi_l\big).\label{f}
\end{align}
Assume $\phi$ uniformly distributed in $[0,2\pi]$\footnote{This fits scenarios where the $\theta$ corresponding to each sensor is unknown, or alternatively, scenarios where there is a very large number of sensors homogeneously distributed in space such that  $p_{\phi}(\phi)\approx\frac{1}{2\pi}$. Although our analysis here holds specifically for such uniform angle distribution, our procedures and ideas can be extended to other scenarios.}, e.g. $p_{\phi}(\phi)=\frac{1}{2\pi}$, then the problem translates to optimize over $f(\bm{\psi})=\frac{1}{2\pi}\int_{0}^{2\pi}f(\bm{\psi},\phi)\mathrm{d}\phi$. Substituting  \eqref{Phi} into \eqref{f} and integrating over $\phi$ we attain
\begin{align}
f(\bm{\psi})&=M+2\sum_{t=1}^{M-1}J_0(t\pi)\cos \psi_t+\nonumber\\
&\qquad+2\sum_{t=1}^{M-2}\sum_{l=t+1}^{M-1}J_0((t-l)\pi)\cos(\psi_t-\psi_l),\label{f2}
\end{align}
which comes from using the integral representation of $J_0(\cdot)$ \cite[Eq.(10.9.1)]{Thompson.2011}. Obviously, \eqref{av_aass} and \eqref{va_aass} still hold but using $f(\bm{\psi})$ instead of $f(\bm{\psi},\phi)$.
\subsubsection{Average energy maximization}
Solving the problem of maximizing the average incident RF power is equivalent to solve $\argmax{\bm{\psi}}f(\bm{\psi})$. Now, since $|\cos \alpha|\le 1$  we have that
\begin{align}
f(\bm{\psi})\!\le\!M\!+\!2\sum_{t=1}^{M\!-\!1}\!\big|J_0(t\pi)\big|\!+\!2\sum_{t=1}^{M\!-\!2}\sum_{l=t\!+\!1}^{M-1}\!\big|J_0((t\!-\!l)\pi)\big|.\label{fu2}
\end{align} 
Using the fact that $J_0(t\pi)$ is positive (negative) if $t$ is even (odd), we can easily observe that the upper bound in \eqref{fu2} can be attained by setting $\psi_t=0$ ($\pi$) for $t$ even (odd) in \eqref{f2}, e.g.
\begin{align}\label{psit}
\psi_t=\mod\!(t,2)\pi,\qquad t\in\{0,1,\cdots,M-1\}.
\end{align}
\begin{remark}\label{re3}
	Above results means that consecutive antennas must be $\pi$ phase-shifted for optimum average energy performance under the $\mathrm{AA-SS}$ scheme.
\end{remark}
Despite optimum $f(\bm{\psi})$ as in \eqref{fu2} cannot be simplified, an accurate approximation is 
\begin{align}
f(\bm{\psi})& 
\approx 0.85\times M^{1.5}, \label{f22}
\end{align}
which comes from standard curve-fitting. 

In Fig.~\ref{Fig23}b we show the impact of above preventive phase shifting on $f(\bm{\psi},\phi)$ for different angles $\phi$. By comparing Fig.~\ref{Fig23}a (no preventive phase shifting) and Fig.~\ref{Fig23}b (preventive phase shifting given in \eqref{psit}), notice that $i)$ the number of minima keeps the same, $ii)$ the best performance occurs now for $\phi=\pm \pi/2$, while the worst situation happens when $\phi=0$; and $iii)$ there are considerable improvements in terms of area under the curves, which are expected to conduce to considerable improvements when averaging over $\phi$.
\subsubsection{Energy dispersion minimization}\label{Ed}
As metric of dispersion we consider the variance. Therefore, herein we aim to solve $\argmin{\bm{\psi}}f(\bm{\psi})$; although notice that this in turns minimize the average available RF energy as well\footnote{It is easily verifiable that the phase shifting that satisfies $\argmin{\bm{\psi}}f(\bm{\psi})$, minimizes the  \textit{coefficient of variation} parameter, which is given by $\frac{\sqrt{\mathrm{var}[\xi]}}{\mathbb{E}[\xi]}$, and constitutes probably a more suitable dispersion metric.}.

Different from the maximization problem, the problem of minimizing $f(\bm{\psi})$ is not such easy to handle. We resorted to Matlab numerical solvers and realized that optimal solutions diverge significantly for different values of $M$. However, we found that $f(\bm{0})$ approximates extraordinarily to the optimum function value. Therefore, $\bm{\psi}=\bm{0}$ \textit{minimizes} the variance of the incident RF power, and not preventive phase shifting is required in this case. This means that
\begin{align}
f(\bm{\psi})&\gtrsim f(\mathbf{0})\nonumber\\
&=M+2\sum_{t=1}^{M-1}J_0(t\pi)+2\sum_{t=1}^{M-2}\sum_{l=t+1}^{M-1}J_0((t-l)\pi)\nonumber\\
&\stackrel{(a)}{\approx }0.64 \times M,\label{Map}
\end{align}
where $(a)$ comes from standard curve-fitting.

\begin{remark}\label{re4}
 Results in \eqref{f22} and \eqref{Map} evidence that both $\max f(\bm{\psi})$ and $\min f(\bm{\psi})$ share a polynomial dependence on $M$. In case of $\min f(\bm{\psi})$ such relation is linear, while $\max f(\bm{\psi})$ is roughly $\sqrt{M}$ times greater than $\min f(\bm{\psi})$.
\end{remark}
\begin{figure}[t!]
	\centering  \includegraphics[width=0.9\columnwidth]{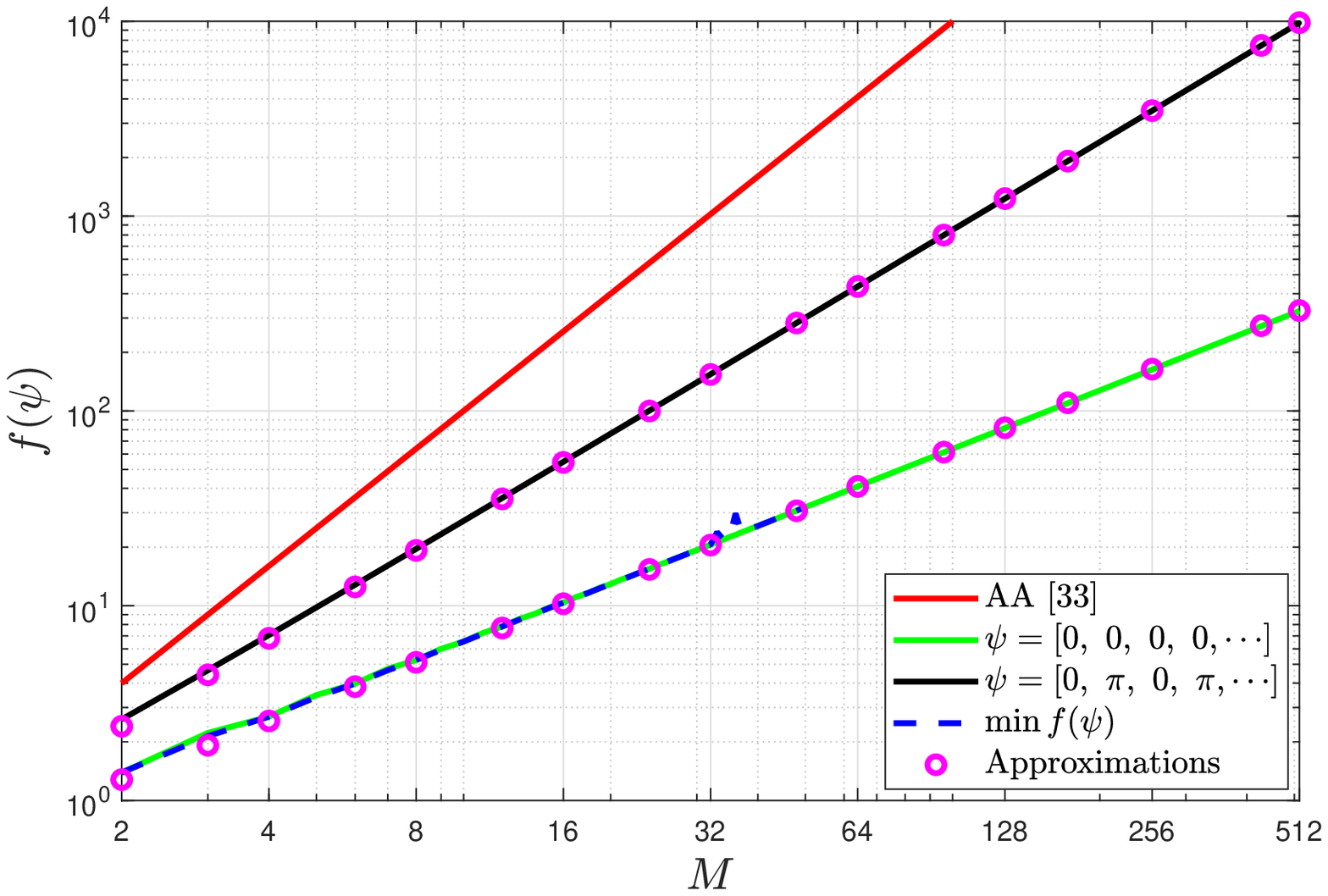}
	\caption{$f(\bm{\psi})$ vs $M$.}		
	\label{Fig_app}
\end{figure}
\subsubsection{Validation}
In Fig.~\ref{Fig_app} we show $f(\bm{\psi})$ as a function of $M$ for i) the phase shifting in \eqref{psit}, which maximizes the average incident RF power, ii) $\bm{\psi}=\mathbf{0}$, e.g. no preventive phase shifting, which minimizes both the variance and average statistics, and iii) the scenario discussed in \cite{Lopez.2019} which is constrained to $\bm{\psi}=\mathbf{\Phi}=\mathbf{0}$. We utilized Matlab numerical optimization solvers for minimizing $f(\bm{\psi})$, but such approach was efficient just for $M\le 32$. For greater $M$, Matlab solvers do not always converge and are extremely time-consuming. Meanwhile, notice that $\bm{\psi}=\mathbf{0}$ indeed approaches extremely to $\arg\min_{\bm{\psi}} f(\bm{\psi})$. 
Approximations in \eqref{f22} and \eqref{Map} are accurate. 
\subsubsection{On the optimization gains}
By using \eqref{av_aass}, \eqref{f22} and \eqref{Map}, the dB gain in average incident RF energy with respect to the non preventive shifting scheme, which in turns minimizes the energy dispersion according to our discussion in Subsection~\ref{Ed}, can be obtained as follows
\begin{align}\label{d1}
\delta_{\mathbb{E}}&\approx 10\log_{10}\bigg(\frac{\beta\big(R_{\scriptscriptstyle\sum}+0.85\kappa M^{1.5}\big)}{M(\kappa+1)}\bigg/\frac{\beta\big(R_{\scriptscriptstyle\sum}+0.64\kappa M\big)}{M(\kappa+1)}\bigg)\nonumber\\
&=10\log_{10}\Big(\frac{ R_{\scriptscriptstyle\sum}+0.85\kappa M^{1.5}}{ R_{\scriptscriptstyle\sum}+0.64\kappa M}\Big)\nonumber\\
&\ge 10\log_{10}\Big(\frac{M+0.85\kappa \sqrt{M}}{M+0.64\kappa }\Big),
\end{align}
where last line comes from the fact that the argument of the logarithm is a decreasing function of $ R_{\scriptscriptstyle\sum}$ since $0.85\kappa M^{1.5}>0.64\kappa M$ for $M\ge 2$, and $ R_{\scriptscriptstyle\sum}\le M^2$. 

Now, we evaluate the costs in terms of the variance increase of such average energy maximization shifting. By using \eqref{va_aass}, \eqref{f22} and \eqref{Map} we have that
\begin{align}\label{d12}
\delta_{\mathrm{var}}&\approx 10\log_{10}\Big(\frac{ R_{\scriptscriptstyle\sum}+2\times 0.85\kappa M^{1.5}}{ R_{\scriptscriptstyle\sum}+2\times 0.64\kappa M}\Big)\nonumber\\
&\stackrel{(a)}{\le} 10\log_{10}\Big(\frac{2\times 0.85\kappa \sqrt{M}}{2\times 0.64\kappa }\Big)\nonumber\\
&=5\log_{10}M+1.23,
\end{align}
where $(a)$ comes from using the lower-bound of $ R_{\scriptscriptstyle\sum}$, e.g. $ R_{\scriptscriptstyle\sum}\ge 0$. 
Above results imply for instance that the gain in the average incident RF energy is above 3.47 dB when $\kappa=10$ and $M=8$, while the variance can increase $5.75$ dB maximum as well.
\section{RF available energy under AA-IS}\label{AA-IS}
Next, we characterize the distribution of the RF power at the receiver end under the $\mathrm{AA-IS}$ scheme.
\begin{theorem}\label{the2}
	Conditioned on the mean phase shifts of the powering signals, the approximated distribution of the RF power available as input to the energy harvester under the $\mathrm{AA-IS}$ is 
	\begin{align}\label{aa-is2}
	\!\xi_\mathrm{aa-is}^\mathrm{rf}&\!\sim\!\!\frac{\beta}{2M^2(\kappa\!+\!1)}\! \Bigg(R_{\scriptscriptstyle\sum}\chi^2\bigg(2,\frac{2\kappa f(\bm{\psi},\phi)}{ R_{\scriptscriptstyle\sum}}\bigg)+\nonumber\\
	 +&\frac{M^2\!\!-\! R_{\scriptscriptstyle\sum}}{M-1}\chi^2\bigg(\!2(M\!-\!1),\!\frac{2M\!(M\!-\!1)\kappa \tilde{\upsilon}(\bm{\psi},\phi)}{M^2- R_{\scriptscriptstyle\sum}}\bigg)\! \Bigg), 
	\end{align}
	where
	\begin{align}
	\tilde{\upsilon}&(\bm{\psi},\phi)=\! M\!-\!1\!+\!2\sum_{j=1}^{M-1}\frac{1}{j(j\!+\!1)}\bigg(\sum_{t\!=\!M\!-\!j\!+\!1}^{M-1}\!\!\!\cos\big(\psi_t\!+\!\Phi_t\big)\!+\nonumber\\
	& \!\!+\!\!\!\sum_{t\!=\!M\!-\!j\!+\!1}^{M-1}\sum_{l=\!t\!+\!1}^{M-1}\!\!\cos\!\big(\psi_t\!+\!\Phi_t\!-\!\psi_l\!-\!\Phi_l\big)\!\!-\!j\cos\!\big(\psi_{M\!-\!j}\!+\!\Phi_{M\!-\!j}\big)\!+\nonumber\\
	&\qquad\qquad -j\!\!\sum_{t=M-j+1}^{M-1}\!\!\!\cos(\psi_{M\!-\!j}\!+\!\Phi_{M\!-\!j}\!-\!\psi_t\!-\!\Phi_t)\bigg).\label{up1}
	\end{align}
\end{theorem}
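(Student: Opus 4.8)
The plan is to view $\xi_{\mathrm{aa-is}}^{\mathrm{rf}}=\tfrac{\beta}{M}\|\mathbf{h}^*\|^2$, with $\mathbf{h}^*=\mathbf{h}_x+\mathbbm{i}\mathbf{h}_y$ as in \eqref{hxy}, as a quadratic form in a correlated Gaussian vector. Diagonalizing $\mathbf{R}$ shows that $\|\mathbf{h}^*\|^2$ is \emph{exactly} a weighted sum of $M$ independent $\chi^2(2,\cdot)$ variables whose weights are the eigenvalues of $\mathbf{R}$; this generalized non-central chi-squared has no tractable closed form, which is exactly why \eqref{aa-is2} is stated as an approximation. The idea is therefore to pick a basis that isolates one mode \emph{exactly} and lumps the remaining $M-1$ modes into a single chi-squared term, noting that the degrees of freedom $2+2(M-1)=2M$ are thereby preserved.

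First I would introduce the Helmert-type orthonormal matrix $\mathbf{Q}$ whose first row is $\mathbf{1}^T/\sqrt{M}$ and whose remaining rows $\{\mathbf{q}_j\}_{j=1}^{M-1}$, each normalized by $1/\sqrt{j(j+1)}$, form an orthonormal basis of the orthogonal complement of $\mathbf{1}$; the $j$-th such vector carries the weight $-j$ on one element and $+1$ on $j$ others. Since $\mathbf{Q}$ is orthonormal it preserves the norm and, being real, the structure of \eqref{hxy}, so $\|\mathbf{h}^*\|^2=\tfrac1M|\mathbf{1}^T\mathbf{h}^*|^2+\sum_{j=1}^{M-1}|\mathbf{q}_j^T\mathbf{h}^*|^2$. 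The coherent mode $\tfrac1M|\mathbf{1}^T\mathbf{h}^*|^2$ equals $\tfrac1\beta\,\xi_{\mathrm{aa-ss}}^{\mathrm{rf}}$ of \eqref{aassf}, so Theorem~\ref{the1} assigns it, \emph{without approximation}, the law $\tfrac{R_{\scriptscriptstyle\sum}}{2M(\kappa+1)}\chi^2\!\big(2,2\kappa f(\bm{\psi},\phi)/R_{\scriptscriptstyle\sum}\big)$; rescaling by $\beta/M$ produces the first summand of \eqref{aa-is2}.

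The approximation enters only in the residual $\sum_{j=1}^{M-1}|\mathbf{q}_j^T\mathbf{h}^*|^2=\|\mathbf{P}\mathbf{h}^*\|^2$ with $\mathbf{P}=\mathbf{I}-\tfrac1M\mathbf{1}\mathbf{1}^T$. Here I would (i) treat this $(M-1)$-dimensional block as independent of the coherent mode, and (ii) replace the generally unequal per-mode variances $\mathbf{q}_j^T\mathbf{R}\mathbf{q}_j$ by their common average $\bar\lambda=\tfrac1{M-1}\operatorname{tr}(\mathbf{P}\mathbf{R})=\tfrac{M^2-R_{\scriptscriptstyle\sum}}{M(M-1)}$, using $\operatorname{tr}(\mathbf{R})=M$. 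With the scales equalized the residual collapses to a single scaled $\chi^2(2(M-1),\cdot)$ of scale $\tfrac{\bar\lambda}{2(\kappa+1)}=\tfrac{M^2-R_{\scriptscriptstyle\sum}}{2M(M-1)(\kappa+1)}$, which is the prefactor of the second summand and preserves the central part of the mean. The non-centrality is then fixed by preserving the non-central part: with the unit-modulus steering vector $\mathbf{a}=[1,e^{\mathbbm{i}(\Phi_1+\psi_1)},\dots,e^{\mathbbm{i}(\Phi_{M-1}+\psi_{M-1})}]^T$, for which $\bm{\omega}_x+\mathbbm{i}\bm{\omega}_y=(1+\mathbbm{i})\mathbf{a}$, $|\mathbf{1}^T\mathbf{a}|^2=f(\bm{\psi},\phi)$ and $\|\mathbf{a}\|^2=M$, the total projected-mean energy is $\tfrac{\kappa}{\kappa+1}\|\mathbf{P}\mathbf{a}\|^2$; dividing by $\tfrac{\bar\lambda}{2(\kappa+1)}$ yields exactly $\tfrac{2M(M-1)\kappa\tilde{\upsilon}(\bm{\psi},\phi)}{M^2-R_{\scriptscriptstyle\sum}}$ with $\tilde{\upsilon}(\bm{\psi},\phi)=\|\mathbf{P}\mathbf{a}\|^2=M-f(\bm{\psi},\phi)/M$.

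The last and most tedious step is to turn the compact identity $\tilde{\upsilon}=\|\mathbf{P}\mathbf{a}\|^2=\sum_{j=1}^{M-1}|\mathbf{q}_j^T\mathbf{a}|^2$ into the explicit \eqref{up1}. Expanding each $|\mathbf{q}_j^T\mathbf{a}|^2$, the $1/\sqrt{j(j+1)}$ normalization generates the $1/(j(j+1))$ weights; the self-terms of each mode contribute $1$ (all $|a_k|=1$) so the diagonal part sums to $M-1$; and the cross terms between the $+1$ entries, and between the $-j$ entry and the $+1$ entries, reproduce respectively the partial sums $\sum_{t=M-j+1}^{M-1}\cos(\psi_t+\Phi_t)$, the double sum, and the two $-j$-weighted cosine terms of \eqref{up1}. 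Collecting the coherent and residual pieces and rescaling by $\beta/M$ gives \eqref{aa-is2}. I expect the main obstacle to be precisely this basis choice: one must find the completion of $\mathbf{1}/\sqrt{M}$ that simultaneously treats the coherent mode exactly and closes the residual bookkeeping in the stated form, while the equal-variance and cross-block-independence simplifications are exactly what make \eqref{aa-is2} an approximation rather than an identity.
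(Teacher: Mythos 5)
Your proposal is correct and lands exactly on \eqref{aa-is2}: the scale factors, degrees of freedom and non-centralities all check out, and your compact identity $\tilde{\upsilon}(\bm{\psi},\phi)=\|\mathbf{P}\mathbf{a}\|^2=M-f(\bm{\psi},\phi)/M$ does expand to \eqref{up1} (this is in fact an \emph{exact} identity, which the paper itself only invokes later, in the variance calculation, as the approximation $f+M\tilde{\upsilon}-M^2\approx 0$). Your route differs from the paper's in how the approximation is introduced, even though both use the same Helmert-type basis. The paper first specializes to \emph{uniform} correlation, $R_{i,j}=\rho$ for $i\neq j$, where the spectral decomposition of $\mathbf{R}$ is known in closed form (eigenvalue $1+(M-1)\rho$ along $\mathbf{1}$, eigenvalue $1-\rho$ with multiplicity $M-1$ on $\mathbf{1}^{\perp}$), so that the two-chi-squared form is \emph{exact} in that case; the approximation then consists of replacing a general $\mathbf{R}$ by the uniform-correlation surrogate with the same $R_{\scriptscriptstyle\sum}$, i.e.\ substituting $\rho=(R_{\scriptscriptstyle\sum}-M)/(M(M-1))$, whose accuracy is assessed numerically via the Bhattacharyya distance. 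You instead keep $\mathbf{R}$ general, treat the coherent mode exactly by reusing Theorem~\ref{the1}, and approximate only the residual block; note that your step (ii) must be read as replacing the whole block covariance $\mathbf{P}\mathbf{R}\mathbf{P}$ by $\bar{\lambda}\mathbf{I}$ on $\mathbf{1}^{\perp}$ --- equalizing the diagonal entries $\mathbf{q}_j^T\mathbf{R}\mathbf{q}_j$ alone would not collapse the block into a single chi-squared, since the cross terms $\mathbf{q}_j^T\mathbf{R}\mathbf{q}_l$, $j\neq l$, must also be neglected. With that reading, your approximation and the paper's surrogate substitution are mathematically identical (the surrogate has $\mathbf{1}$ as an eigenvector with eigenvalue $R_{\scriptscriptstyle\sum}/M$ and acts as $\bar{\lambda}\mathbf{I}$ with $\bar{\lambda}=\frac{M^2-R_{\scriptscriptstyle\sum}}{M(M-1)}$ on $\mathbf{1}^{\perp}$), so the two derivations necessarily produce the same formula. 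What your framing buys is transparency and economy: it names precisely what is neglected (cross-block dependence and within-block anisotropy), and the complex steering-vector identity $\bm{\omega}_x+\mathbbm{i}\bm{\omega}_y=(1+\mathbbm{i})\mathbf{a}$ shortcuts the paper's term-by-term trigonometric bookkeeping in \eqref{upp1}. What the paper's framing buys is an exactness guarantee in the uniform-correlation case plus an explicit numerical validation protocol for general $\mathbf{R}$, which your argument would still need to borrow to justify calling \eqref{aa-is2} a good approximation rather than merely a plausible one.
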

\begin{proof}
	See Appendix~\ref{App-B}. \phantom\qedhere
\end{proof}
For $\bm{\psi}+\mathbf{\Phi}=\mathbf{0}$, \eqref{aa-is2} matches \cite[Eq.(39)]{Lopez.2019}\footnote{Notice that \cite[Eq.(39)]{Lopez.2019} describes the distribution of the RF incident power  under un-shifted mean phases and the SA operation considering the whole block time. Therefore, both \eqref{aa-is2} and \cite[Eq.(39)]{Lopez.2019} must match when $\mathbf{\Phi}=\mathbf{0}$ according to our discussions in Subsection~\ref{SA}.}. However, 
when mean phase shifts between antenna elements increase, both expressions diverge.
\subsection{On the impact of different mean phases}
It is not completely clear from \eqref{aa-is2} whether phase shifts are advantageous or not under this scheme.
Let us start by checking the average statistics of $\xi_\mathrm{aa-is}^\mathrm{rf}$ as follows
\begin{align}\label{av_aais}
\mathbb{E}\big[\xi_\mathrm{aa-is}^\mathrm{rf}\big]&\approx \frac{\beta}{2M^2(\kappa+1)}\Bigg(\frac{M^2- R_{\scriptscriptstyle\sum}}{M-1}\bigg(2(M-1)+\nonumber\\
&+\!\frac{2M(M\!-\!1)\kappa \tilde{\upsilon}(\bm{\psi},\phi)}{M^2- R_{\scriptscriptstyle\sum}}\bigg)\!+\! R_{\scriptscriptstyle\sum}\bigg(2+\frac{2\kappa f(\bm{\psi},\phi)}{ R_{\scriptscriptstyle\sum}}\bigg)\Bigg)\nonumber\\
&=\frac{\beta}{M^2(\kappa+1)}\Big(M^2-R_{\scriptscriptstyle\sum}+M\kappa\tilde{\upsilon}(\bm{\psi},\phi)+\nonumber\\
&\qquad\qquad\qquad\qquad\qquad\qquad +R_{\scriptscriptstyle\sum}+\kappa f(\bm{\psi},\phi)\Big)\nonumber\\
&=\beta\Big(1+\frac{\kappa \tilde{f}(\bm{\psi},\phi)}{M^2(\kappa+1)}\Big),
\end{align}
where $\tilde{f}(\bm{\psi},\phi)=f(\bm{\psi},\phi)+M\tilde{\upsilon}(\bm{\psi},\phi)-M^2$. Notice that the larger $\tilde{f}(\bm{\psi},\phi)$, the greater $\mathbb{E}\big[\xi_\mathrm{aa-is}^\mathrm{rf}\big]$. However, $\tilde{f}(\bm{\psi},\phi)\lll 1$ independently of the value of $\phi$ as shown in Fig.~\ref{Fig34}a for the case where no preventive adjustment of mean phases is carried out, e.g. $\bm{\psi}=\bm{0}$. 
\begin{remark}
	Therefore, channel mean phase shifts do not \textit{strictly} bias the average harvested energy, which is intuitively expected since transmitted signals are independent to each other. Meanwhile, such result is very different from what happened under the $\mathrm{AA-SS}$ scheme for which mean phase shifts always influenced (negatively) on such metric.
\end{remark}
Additionally, notice that when $ R_{\scriptscriptstyle\sum}$ is maximum (perfect correlation), $\mathrm{AA-SS}$ could provide up to $M$ times more energy on average than $\mathrm{AA-IS}$, whose average statistics are not affected in any way by $ R_{\scriptscriptstyle\sum}$. 
Previous statement holds as long as there is not a strong LOS component; however, as $\kappa$ takes greater values, which is typical of WET systems due to the short range characteristics, $f(\bm{\psi},\phi)$ and $\tilde{f}(\bm{\psi},\phi)$ become more relevant, which favors the $\mathrm{AA-IS}$ scheme.

Let us investigate now the impact on the variance as follows
\begin{align}
\mathrm{var}[\xi_\mathrm{aa-is}^\mathrm{rf}]&\approx \frac{\beta^2}{2M^4(\kappa+1)^2}\Bigg(\frac{(M^2- R_{\scriptscriptstyle\sum})^2}{(M-1)^2}\bigg(2(M-1)+\nonumber\\
&+\!\frac{4M\!(M\!-\!1)\kappa \tilde{\upsilon}(\bm{\psi},\phi)}{M^2- R_{\scriptscriptstyle\sum}}\bigg)\!\!+\! R_{\scriptscriptstyle\sum}^2\bigg(\!2\!+\!\frac{4\kappa f(\bm{\psi},\phi)}{ R_{\scriptscriptstyle\sum}}\bigg)\!\Bigg)\nonumber\nonumber\\
&\approx \frac{\beta^2}{M^3(M\!-\!1)(\kappa\!+\!1)^2}\Big(M^3(1\!+\!2\kappa)\!+\! R_{\scriptscriptstyle\sum}^2+\nonumber\\
&-\!2M R_{\scriptscriptstyle\sum}(1\!+\!\kappa)\!+\!2\kappa M( R_{\scriptscriptstyle\sum}\!-\!M)f(\bm{\psi},\phi)\Big),\label{va2}
\end{align}
where last line comes from taking advantage of $f(\bm{\psi},\phi)+M\tilde{\upsilon}(\bm{\psi},\phi)-M^2\approx 0\rightarrow \tilde{\upsilon}(\bm{\psi},\phi)\approx M-f(\bm{\psi},\phi)/M$ to write $\mathrm{var}[\xi_\mathrm{aa-is}^\mathrm{rf}]$ just as a function of $f(\bm{\psi},\phi)$. Since $f(\bm{\psi},\phi)\ge 0$, it is obvious that just when $M> R_{\scriptscriptstyle\sum}$, e.g. negative correlation of some antennas, the system performance benefits from having different mean phases.
\subsection{Preventive adjustment of mean phases}\label{preventive2}
Herein, we discuss on how to set the vector $\bm{\psi}$ for optimizing the system performance. Again, $\phi$ is taken randomly and uniformly from $[0,2\pi]$ as in Subsection~\ref{preventive1}.
\subsubsection{Average energy maximization}
As shown in Fig.~\ref{Fig34}a, channel mean phase shifts do not impact significantly on the average incident RF power. Then, it is intuitively expected that none preventive phase shifting would improve such average statistics. To corroborate this we require computing $\tilde{f}(\bm{\psi})=\frac{1}{2\pi}\int_{0}^{2\pi}\tilde{f}(\bm{\psi},\phi)\mathrm{d}\phi$, for which
\begin{align}
\tilde{f}(\bm{\psi})&=\frac{1}{2\pi}\int\limits_{0}^{2\pi}f(\bm{\psi},\phi)\mathrm{d}\phi \!+\! \frac{M}{2\pi}\int\limits_{0}^{2\pi}\tilde{\upsilon}(\bm{\psi},\phi)\mathrm{d}\phi-\frac{M^2}{2\pi}\int\limits_{0}^{2\pi}\mathrm{d}\phi\nonumber\\
&=f(\bm{\psi})\!-\!M\!+\!\!\sum_{j=1}^{M-1}\!\frac{4\pi}{j(j+1)}\bigg(\sum_{t\!=\!M\!-\!j\!+\!1}^{M-1}\!\!J_0(t\pi)\cos\psi_t+\nonumber\\
&\qquad+\sum_{t=M-j+1}^{M-1}\sum_{l=t+1}^{M-1}J_0\big((l-t)\pi\big)\cos(\psi_t-\psi_l)+\nonumber\\
&\qquad-j\!\!\sum_{t=M-j+1}^{M-1}\!\!J_0\big((t-M+j)\pi\big)\cos(\psi_{M-j}-\psi_t)+\nonumber\\
&\qquad\qquad\qquad\qquad\!-jJ_0\big((M\!-\!j)\pi\big)\cos\psi_{M\!-\!j}\bigg),\label{fpsi}
\end{align}
where $f(\bm{\psi})$ is given in \eqref{f2} and last line comes from integrating $\tilde{\upsilon}(\bm{\psi},\phi)$ over $\phi$ by using the integral representation of $J_0(\cdot)$ \cite[Eq.(10.9.1)]{Thompson.2011}. Due to the extreme non-linearity of \eqref{fpsi} we resort to exhaustive search optimization solvers of Matlab to find $\arg\max_{\bm{\psi}}\tilde{f}(\bm{\psi})$ and compare its performance with the non preventive phase shifting scheme for which $\bm{\psi}=\bm{0}$. Specifically, we show in Fig.~\ref{Fig34}b their associated performance in terms of $\tilde{f}({\psi})$ normalized by $M^2$ since the average incident RF power depends strictly on such ratio. Results evidence that although the performance gap is large in relative terms, it is not in absolute values. That is, not even the optimum preventive  phase shifting allows increasing $\tilde{f}({\psi})/M^2$ significantly, e.g. $\tilde{f}({\psi})/M^2\approx 0$.
\begin{remark}\label{re6}
	Then, and based on \eqref{av_aais}, the average incident RF power is approximately $\beta$. Therefore, this scheme cannot take advantage of the multiple antennas to improve the average statistics of the incident RF power in any way.
\end{remark}
\begin{figure}[t!]
	\centering  \includegraphics[width=0.9\columnwidth]{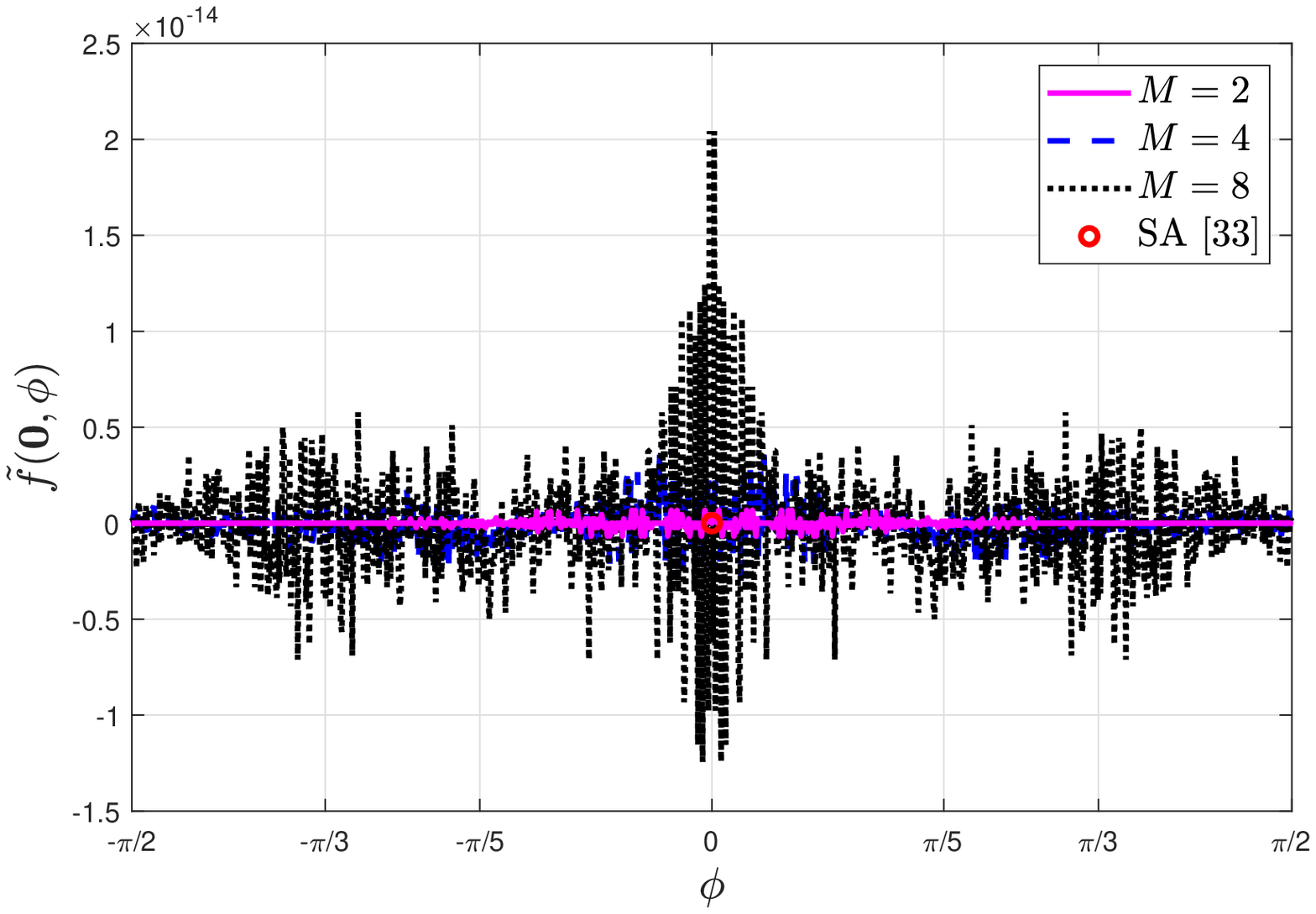}\\	
	\centering  \includegraphics[width=0.9\columnwidth]{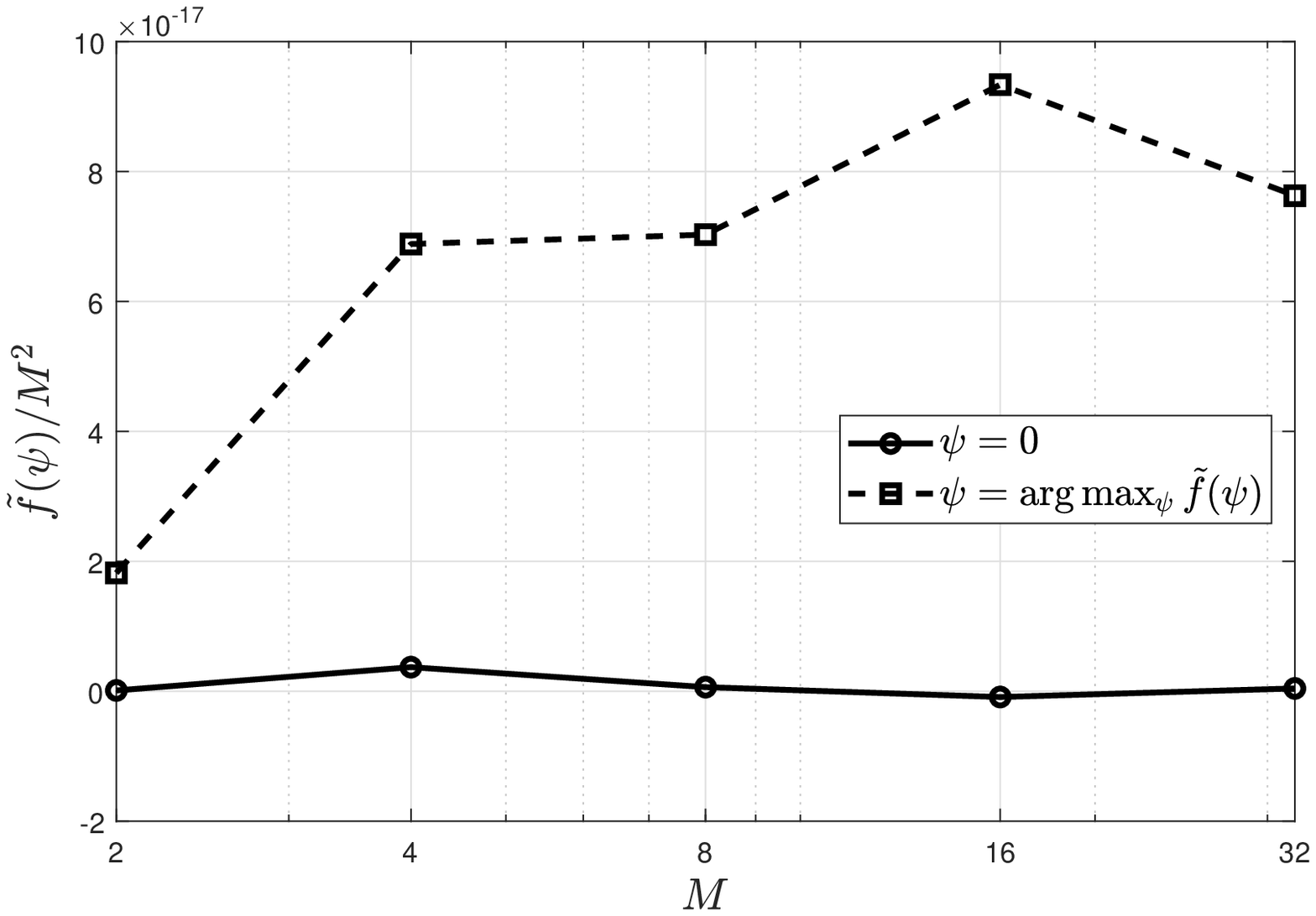}		
	\caption{$a)$ $\tilde{f}(\bm{0},\phi)$ vs $\phi$ for $M\in\{2,4,8\}$ (top), and $b)$ $\tilde{f}(\bm{\psi})$ vs $M$ for $\bm{\psi}=\mathbf{0}$ and $\bm{\psi}=\arg\max_{\bm{\psi}}\tilde{f}(\bm{\psi})$ (bottom).}
	\label{Fig34}
\end{figure} 
\subsubsection{Energy dispersion minimization}
As in Subsection~\ref{Ed} we consider the variance of the incident RF power as the dispersion measure. Then, we have that
\begin{align}
\arg\min_{\bm{\psi}}&\ \mathrm{var}[\xi_{\mathrm{aa-is}}^\mathrm{rf}]\nonumber\\
&=\left\{\begin{array}{ll}
\arg\min_{\bm{\psi}} f(\bm{\psi}), & \mathrm{if}\ R_{\scriptscriptstyle\sum}>M\\
\arg\max_{\bm{\psi}} f(\bm{\psi}), & \mathrm{if}\ R_{\scriptscriptstyle\sum}<M
\end{array}\right.\nonumber\\
&=\left\{\begin{array}{ll}
\bm{\psi}\approx\mathbf{0}, & \mathrm{if}\ R_{\scriptscriptstyle\sum}>M\\
\psi_t=\mathrm{mod}(t,2)\pi, & \mathrm{if}\ R_{\scriptscriptstyle\sum}<M
\end{array}\right.,\label{opt}
\end{align}
where last line comes from using directly our previous results in Subsection~\ref{preventive1}. In addition, notice that  $\mathrm{var}[\xi_{\mathrm{aa-is}}^\mathrm{rf}]$ is not a function of $f(\bm{\psi})$ when $R_{\scriptscriptstyle\sum}=M$ (see \eqref{va2}), thus, a preventive  phase shifting is not necessary as it does not make any difference. 
\begin{remark}
	Since the average incident RF energy is not affected by any phase shifting, we can conclude that \eqref{opt} provides the optimum preventive phase shifting. This means that phase shifting is not required when $R_{\scriptscriptstyle\sum}\ge M$, as in most of the practical systems. 
\end{remark}
On the other hand, observe from \eqref{va2} that the variance decreases with $M$, thus, although the $\mathrm{AA-IS}$ is not more advantageous than single antenna transmissions in terms of the provided average RF energy, it benefits significantly from the multiple antennas to reduce the energy dispersion.

Finally, based on Subsection~\ref{SA} it is expected that the phase shifting given in \eqref{opt} approaches the optimum for the SA scheme as well, hence we adopt it in such scenario.
\section{Numerical Results}\label{results}
Herein we present simulation results on the performance of the discussed CSI-free multiple-antenna schemes under the non-linear EH model given in \eqref{gx}.
We evaluate the average harvested energy\footnote{Notice that the average PTE is just a scaled version of such average harvested energy for a fixed transmit power, and therefore its corresponding curves follow the same trends. Similarly the PDF and CDF of the harvested energy can be easily extrapolated to the statistics of the instantaneous PTE.}, and energy outage probability, which refers to the probability that the RF energy falls below an energy threshold $\xi_0$, thus, interrupting the devices' operation\footnote{$\xi_0$ must be obviously not smaller than the EH sensitivity.}. Notice the energy outage probability is highly related with both mean and variance statistics.
The EH hardware parameters 
agree with the EH circuitry experimental data at $2.45$GHz in \cite{Khan.2019}.

 \begin{table}[!t]
	\centering
	\caption{CSI-free schemes and corresponding preventive phase shifting}
	\begin{tabular}{lc}
		\toprule
		\textbf{Schemes} & \textbf{$\psi_t$}  \\
		\midrule
		$\mathrm{AA-SS}_{\max \mathrm{E}}$  & $\mathrm{mod}(t,2)\pi$  \\ 
		$\mathrm{AA-SS}_{\min \mathrm{var}}$  & $0$  \\
		$\mathrm{AA-IS}$  & $0$  \\
		$\mathrm{SA}$  & $0$  \\
		''scheme'' ($\phi=0$) \cite{Lopez.2019} (unrealistic) & $0$ ($\phi=0$)  \\
		\bottomrule
	\end{tabular}\label{table}
\end{table}

We take $\phi$ uniformly and randomly from $[0,2\pi]$, while comparing the corresponding results to those with equal mean phases, $\phi=0$, which are not attainable in practice and are just presented as benchmark. 
For clarity we summarize  the schemes under consideration in Table~\ref{table}. 
Notice that in case of $\mathrm{AA-IS}$ and $\mathrm{SA}$ we just consider the preventive phase shifting given in the first line of \eqref{opt} since we assume positive spatial correlation, e.g. $ R_{\scriptscriptstyle\sum}\ge M$. Specifically, we assume  exponential spatial correlation with coefficient $\tau$ such
that $R_{i,j}=\tau^{|i-j|}$, hence, 
\begin{align}
 R_{\scriptscriptstyle\sum}\!=\!M\!+\!2\sum_{i=1}^{M-1}(M-i)\tau^i\!=\!\frac{M(1\!-\!\tau^2)\!-\!2\tau(1\!-\!\tau^M)}{(1-\tau)^2},
\end{align}
where last step comes from using a geometric series compact representation. 
Such model is physically reasonable since
correlation decreases with increasing distance between antennas. Unless stated otherwise we set the system parameters as shown in Table~\ref{table2}. Notice that $\kappa=5$ and $\tau=0.3$ to account for certain LOS and correlation, while we assume the PB is equipped with a moderate-to-small  number of antennas $M=8$. 
\subsection{On the distribution of the harvested energy}\label{ehs}
In Fig.~\ref{Fig_1r} we illustrate the PDF of the energy harvested under each of the schemes. Specifically, Fig.~\ref{Fig_1r}a shows the PDF for two different path-loss profiles, while considering $\mathrm{AA-IS}$ and $\mathrm{SA}$ schemes. We observe that under $\mathrm{SA}$ the harvested energy under large path loss presents slightly better statistics than under $\mathrm{AA-IS}$, while $\mathrm{AA-IS}$ is superior when operating under better average channel statistics. Therefore, we corroborate our statements in Remark~\ref{re1}: devices far from the PB indeed benefit more from $\mathrm{SA}$ than from $\mathrm{AA-IS}$, while those closer to the PB benefit more from the latter. We also show that the statistics improve by considering the channel mean phases, therefore, the performance of these schemes is better in a practical setup than the foreseen by \cite{Lopez.2019}. Something different occurs under $\mathrm{AA-SS}$ as shown in Fig.~\ref{Fig_1r}b. As we discussed in Section~\ref{sAASS}, the un-shifted mean phase assumption is the most optimistic under the operation of $\mathrm{AA-SS}$, and notice that the performance gains with respect to what can be attained in practice, e.g. under the $\mathrm{AA-SS}_\mathrm{min\ var}$ and $\mathrm{AA-SS}_\mathrm{max\ E}$ discussed in this work, are extremely notorious. Regarding  $\mathrm{AA-SS}_\mathrm{max\ E}$ vs $\mathrm{AA-SS}_\mathrm{min\ var}$, we can observe the performance gains in terms of average and variance of the harvested energy, respectively, of one with respect to the other. We must say that although $\mathrm{AA-SS}_\mathrm{min\ var}$ provides less disperse harvested energy values, they could be extremely small compared to the ones achievable under $\mathrm{AA-SS}_\mathrm{max\ E}$.
\begin{table}[!t]
	\centering
	\caption{Simulation parameters}
	\begin{tabular}{cc}
		\toprule
		\textbf{Parameter} & \textbf{Value}  \\
		\midrule
		$g_\mathrm{max}$  & $2$ mW  \\ 
		$(a,\ b)$  & $(0.56,\ 3.5)$  \\
		$\xi_0$  & $-2$ dBm  \\
		$\phi$ & uniformly random in $[0,2\pi]$\\
		$\kappa$  & $5$   \\
		$\tau$ & $0.3$  \\
		$M$ & $8$ \\
		\bottomrule
	\end{tabular}\label{table2}
\end{table}
\begin{figure}[t!]
	\centering  
	\includegraphics[width=1.05\columnwidth]{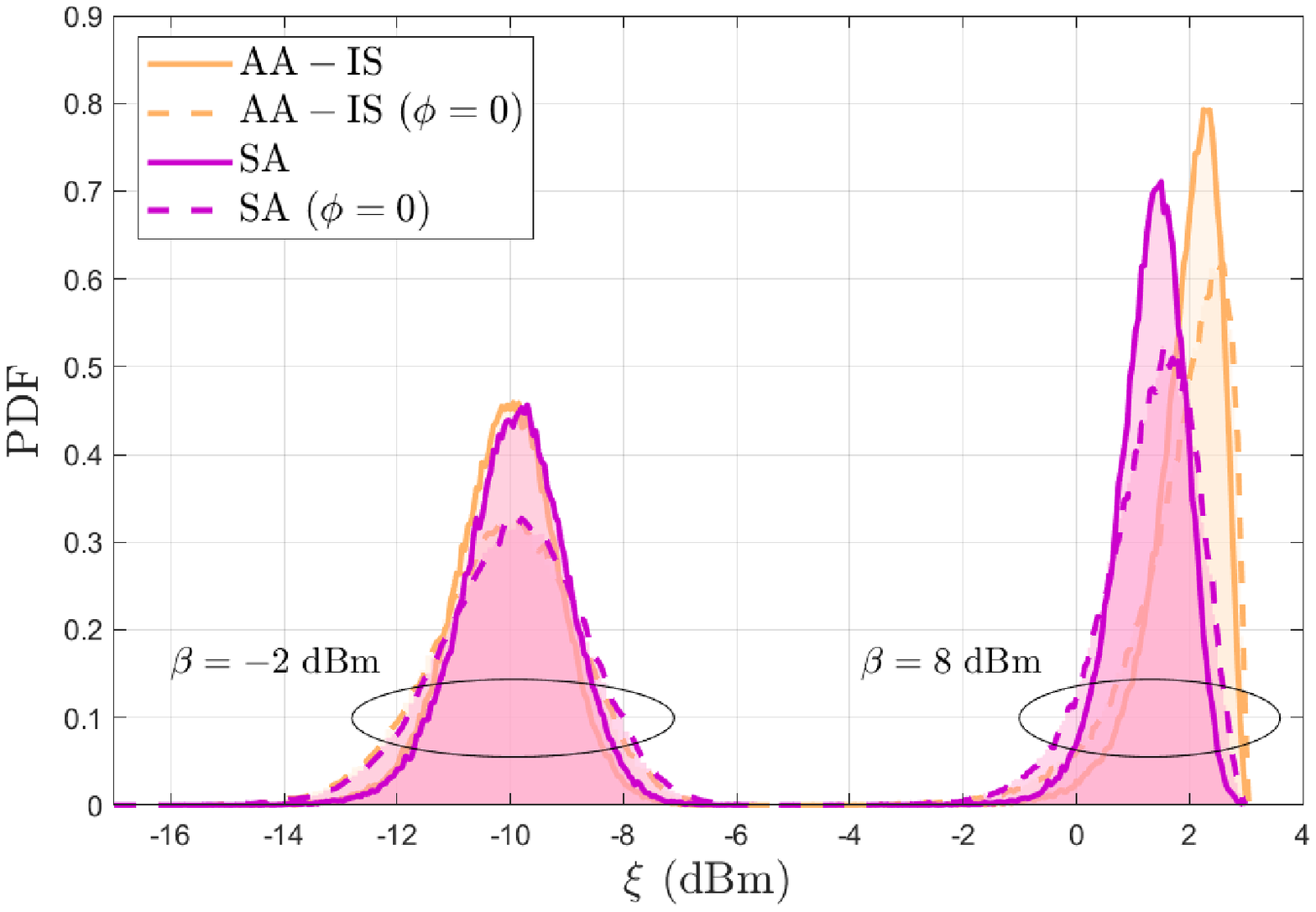}\\ 
	\includegraphics[width=1.05\columnwidth]{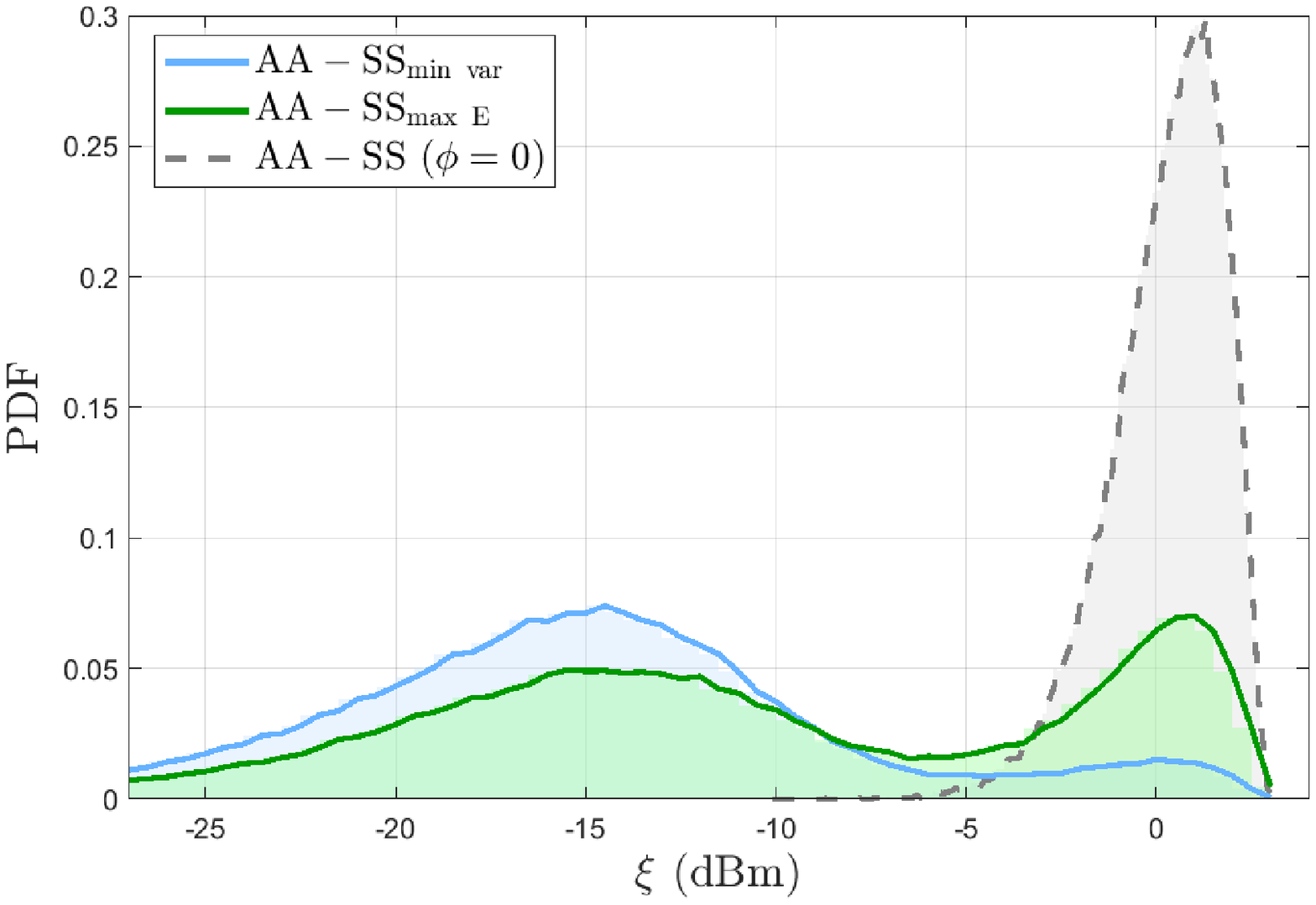}
	\caption{PDF of the harvested energy  $a)$ under $\mathrm{AA-IS}$ and $\mathrm{SA}$ for $\beta\in\{-2,8\}$ dBm (top), and $b)$ under $\mathrm{AA-SS}$ for $\beta=-2$ dBm (bottom).} 	
	\label{Fig_1r}
\end{figure} 

Let us analyze now the CDF curves, which are shown in  Fig.~\ref{Fig_2r} for $\beta=2$ dBm\footnote{In this and most of the subsequent figures we adopt $\beta=2$ dBm to illustrate the performance in the desired operation range, e.g. above/below sensitivity/saturation, since for an incident RF power of $2$ dBm ($1.6$ mW) the harvested power becomes $-5.5$ dBm ($0.28$ mW).}. 
Notice that, although with greater variance, $\mathrm{AA-SS}_\mathrm{max\ E}$ performs superior\footnote{In this context a superior performance means that the corresponding CDF curve is below.} than $\mathrm{AA-SS}_\mathrm{min\ var}$. In fact, this holds in most scenarios, which highlights the need of a $\pi-$phase shifting in consecutive antennas when using the $\mathrm{AA-SS}$ scheme as foreseen in Remark~\ref{re3}. Meanwhile, the greatest diversity order is attained by  $\mathrm{AA-IS}$ and $\mathrm{SA}$ schemes, which perform similarly. In this case, $\mathrm{AA-IS}$ performs slightly superior but this depends greatly on the path loss according to our discussions in previous paragraph. Notice that these two schemes guarantee, for instance, chances above $99\%$ of harvesting more than $-8$ dBm of power, compared to just $60\%$ and $40\%$ when using $\mathrm{AA-SS}_\mathrm{max\ E}$ and $\mathrm{AA-SS}_\mathrm{min\ var}$, respectively. Once again, the gaps with respect to the performance under un-shifted mean phases are evidenced, and are shown to be enormous especially when transmitting the same signal simultaneously by all antennas. 
\begin{figure}[t!]
	\centering  
	\includegraphics[width=0.9\columnwidth]{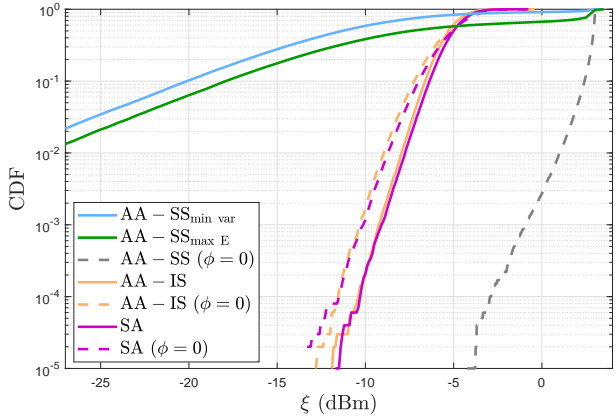}
	\vspace{-1mm}
	\caption{CDF of the harvested energy for $\beta=2$ dBm.}	 
	\vspace{-1mm}
	\label{Fig_2r}
\end{figure} 
\subsection{On the impact of the channel}
\begin{figure}[t!]
	\centering  
	\includegraphics[width=0.9\columnwidth]{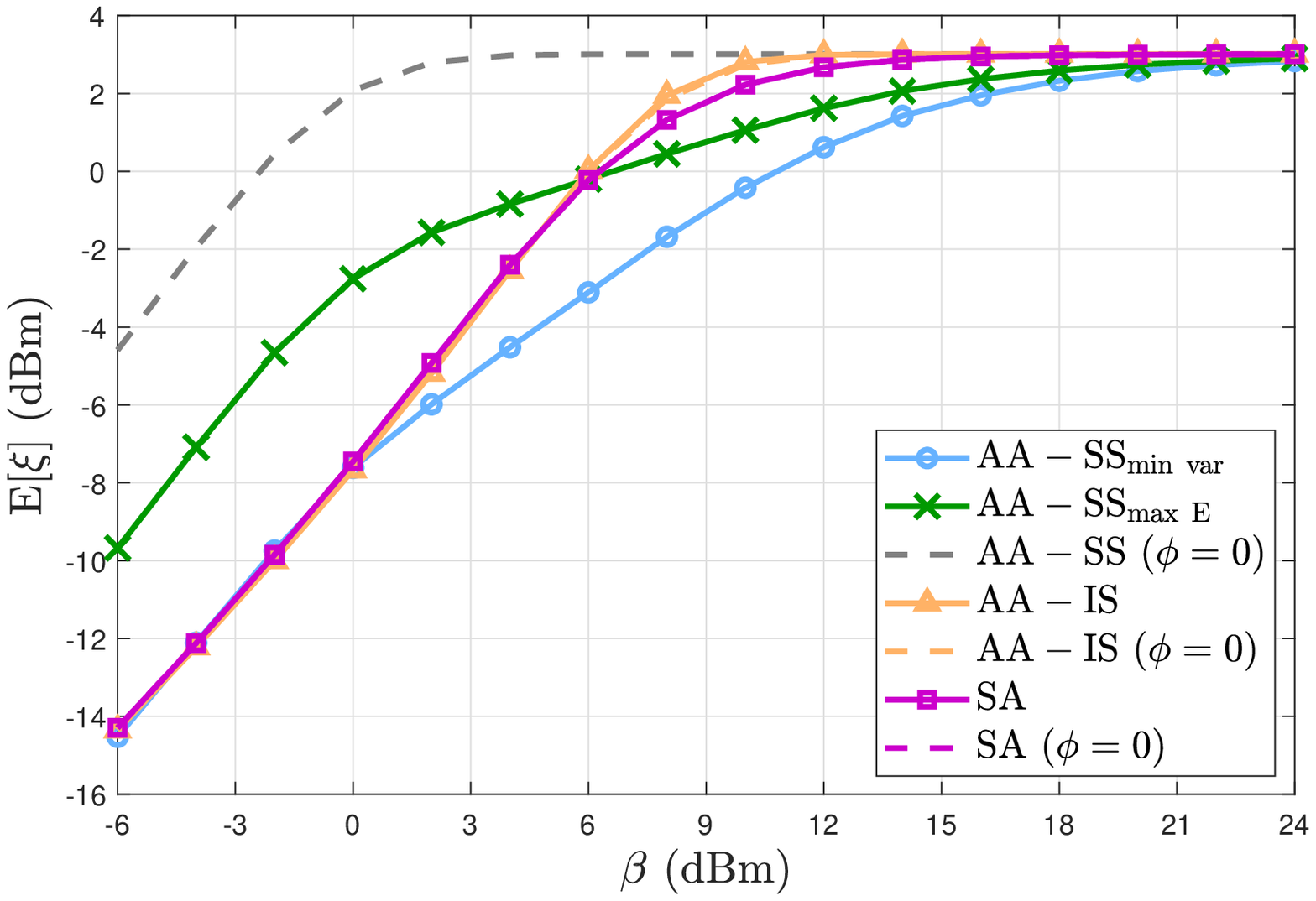}\\
	\includegraphics[width=0.9\columnwidth]{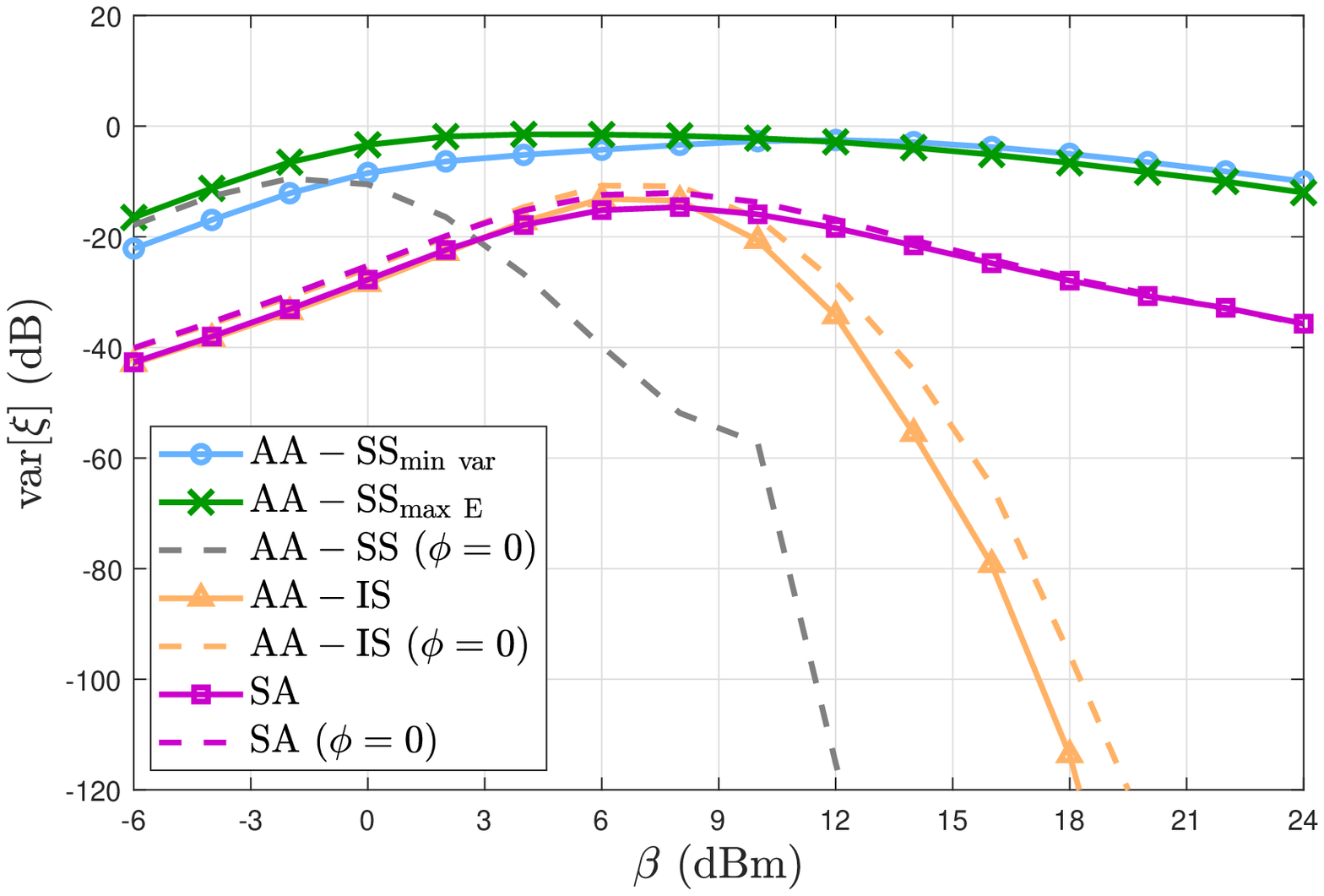}\\	
	\includegraphics[width=0.9\columnwidth]{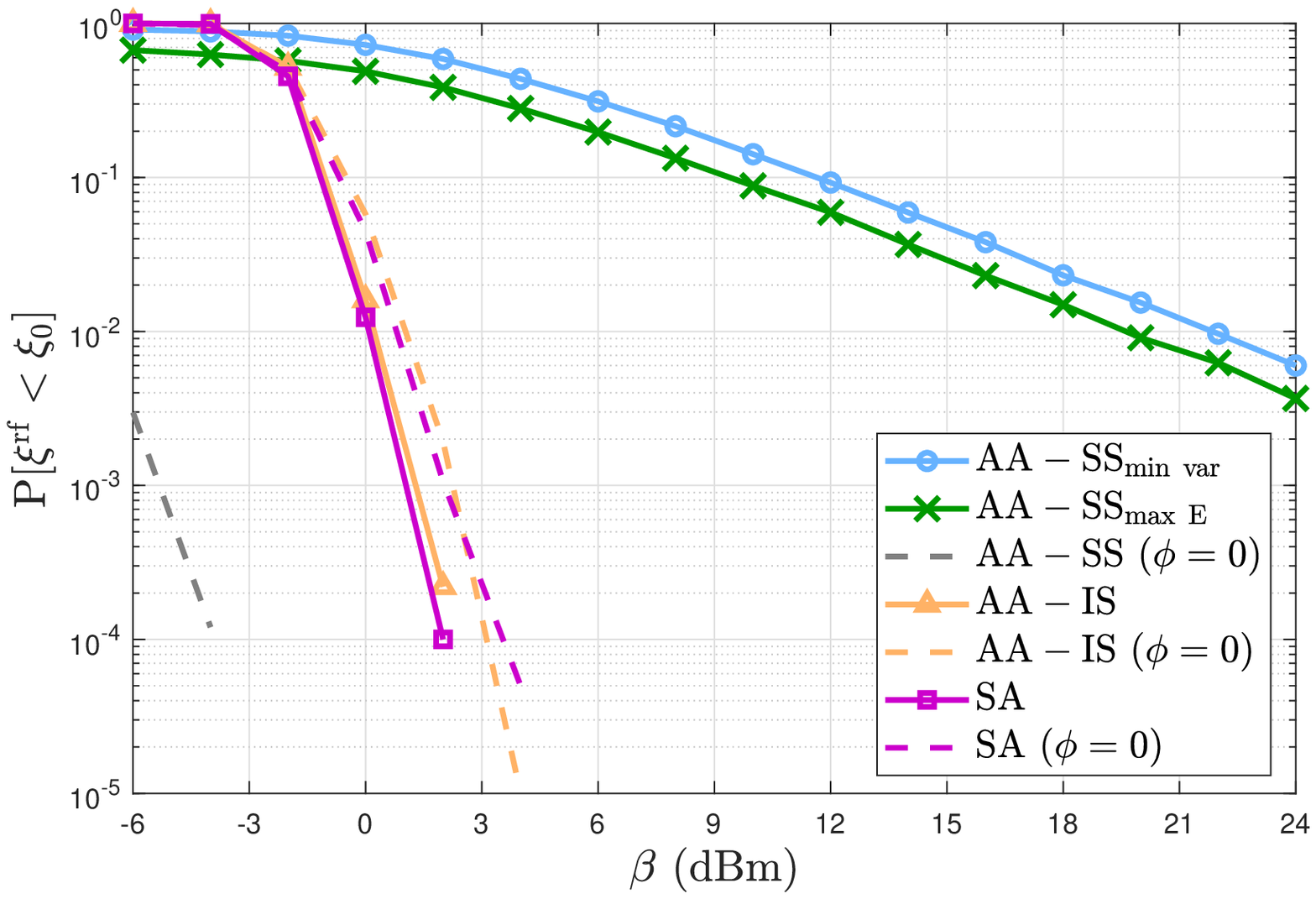}
	\caption{$a)$ Average harvested energy (top), $b)$ variance of the harvested energy (middle), and $c)$ energy outage probability (bottom), as a function of $\beta$.}		
	\label{Fig6}
\end{figure}
Fig.~\ref{Fig6} shows the average (Fig.~\ref{Fig6}a) and variance (Fig.~\ref{Fig6}b) of the harvested energy, and the energy outage probability (Fig.~\ref{Fig6}c), as a function of the path loss. In general, both the average and variance of the incident RF power increases with $\beta$ for all schemes. However, the saturation phenomenon captured by \eqref{gx} makes that the average harvested energy saturates at some point, while obviously the variance would decrease from that point onward. These phenomena are observed in Fig.~\ref{Fig6}a and  Fig.~\ref{Fig6}b. Notice that $\mathrm{AA-SS}_\mathrm{max\ E}$ achieves considerable gains in terms of average harvested energy compared to $\mathrm{AA-SS}_\mathrm{min\ var}$\footnote{Based on Remark~\ref{re4} statement, specifically on \eqref{f22} and \eqref{Map}, and \eqref{av_aass}, $\mathrm{AA-SS}_\mathrm{max\ E}$ provides $\sim \frac{\beta\kappa}{\kappa+1}(0.85\sqrt{M}-0.64)$ more energy units than $\mathrm{AA-SS}_\mathrm{min\ var}$ on average.}. In fact, the performance curve of $\mathrm{AA-SS}_\mathrm{max\ E}$ lies approximately in the middle between the curves of $\mathrm{AA-SS}_\mathrm{min\ var}$ and the idealistic $\mathrm{AA-SS}$ ($\phi=0$). Meanwhile, in terms of variance and energy outage probability they do not differ significantly. Regarding $\mathrm{AA-IS}$ and $\mathrm{SA}$, we can observe that they even outperform $\mathrm{AA-SS}_\mathrm{min\ var}$'s performance, while $\mathrm{AA-SS}_\mathrm{max\ E}$ is strictly superior in the region of large path loss. Therefore, far devices are the most benefited from $\mathrm{AA-SS}_\mathrm{max\ E}$. Meanwhile, since the variance of the incident RF energy (and also harvested energy as observed in Fig.~\ref{Fig6}b) is the lowest under $\mathrm{AA-IS}$ and $\mathrm{SA}$, they are capable of providing a more stable energy supply, which when operating near (or in) saturation makes the devices to harvest more energy on average when  compared to $\mathrm{AA-SS}_\mathrm{max\ E}$. That is why the average harvested energy under $\mathrm{AA-IS}$ and $\mathrm{SA}$ is the greatest among the practical schemes for $\beta>6$ dBm. Specifically, $\mathrm{AA-IS}$ performs the best in that region, while $\mathrm{SA}$ is slightly better than $\mathrm{AA-IS}$ for $\beta<4$ dBm. Discussions on this were carried out already in Subsections~\ref{SA} and \ref{ehs}.

\begin{figure}[t!]
	\centering  
	\includegraphics[width=0.9\columnwidth]{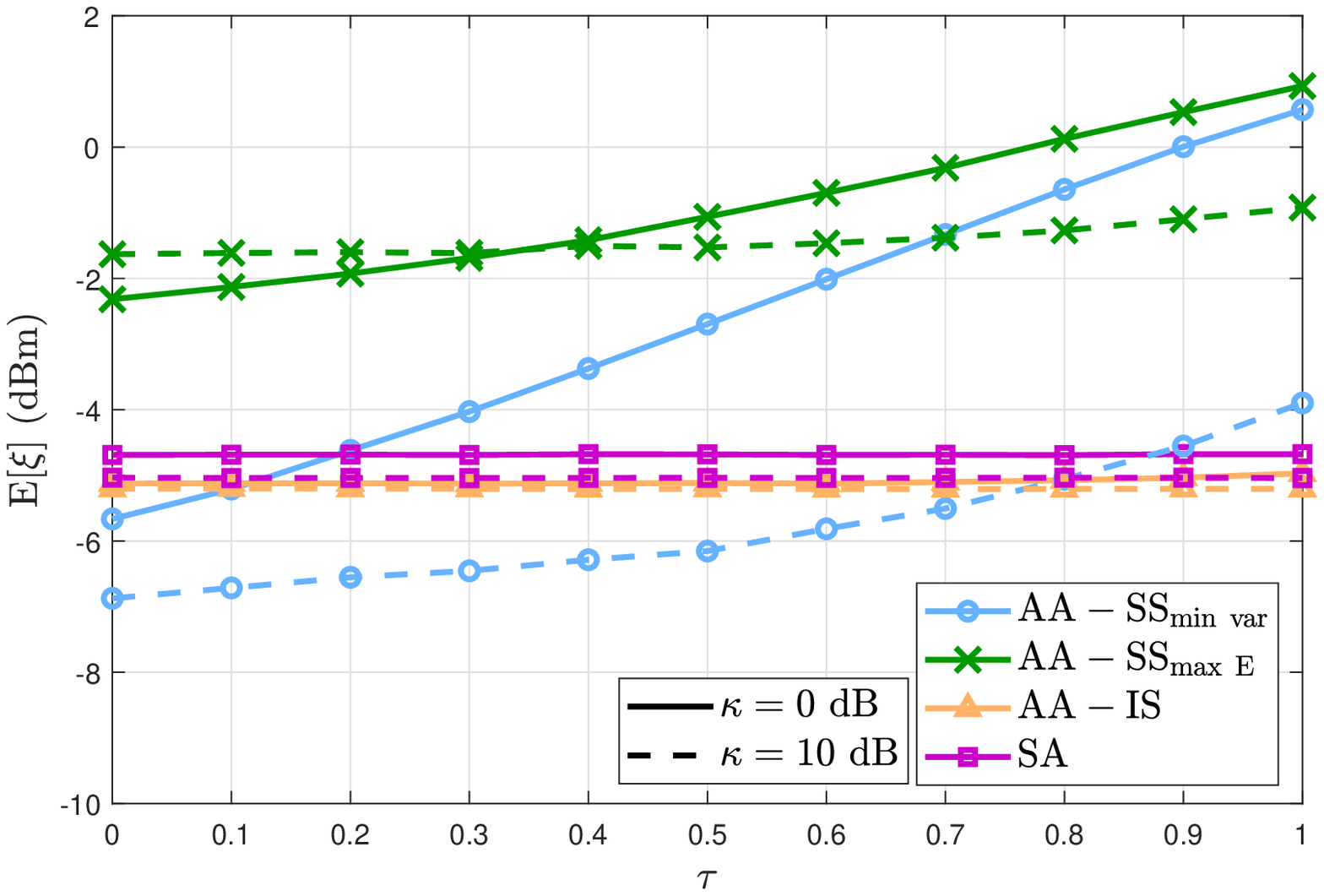}\\
	\includegraphics[width=0.9\columnwidth]{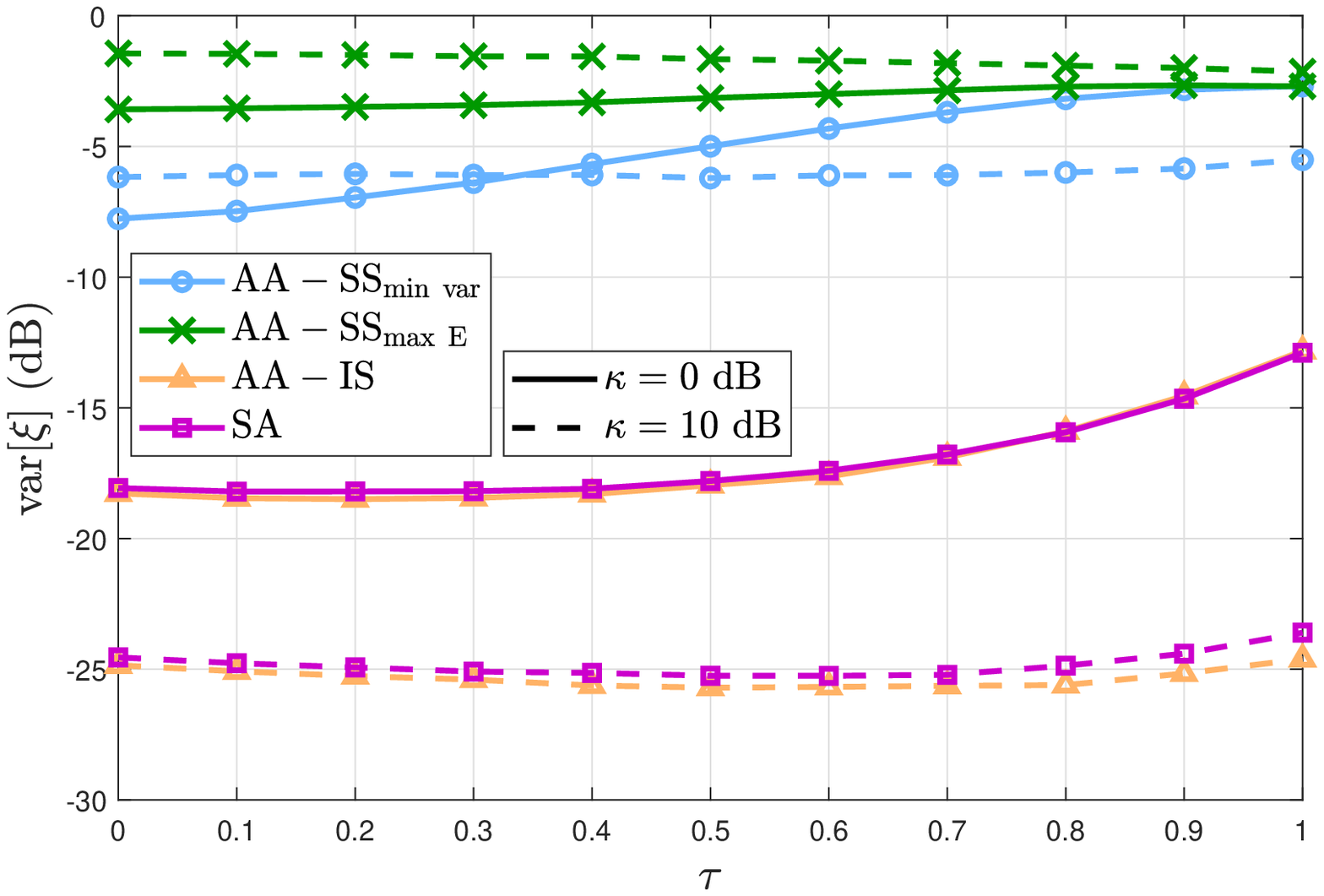}\\	
	\includegraphics[width=0.9\columnwidth]{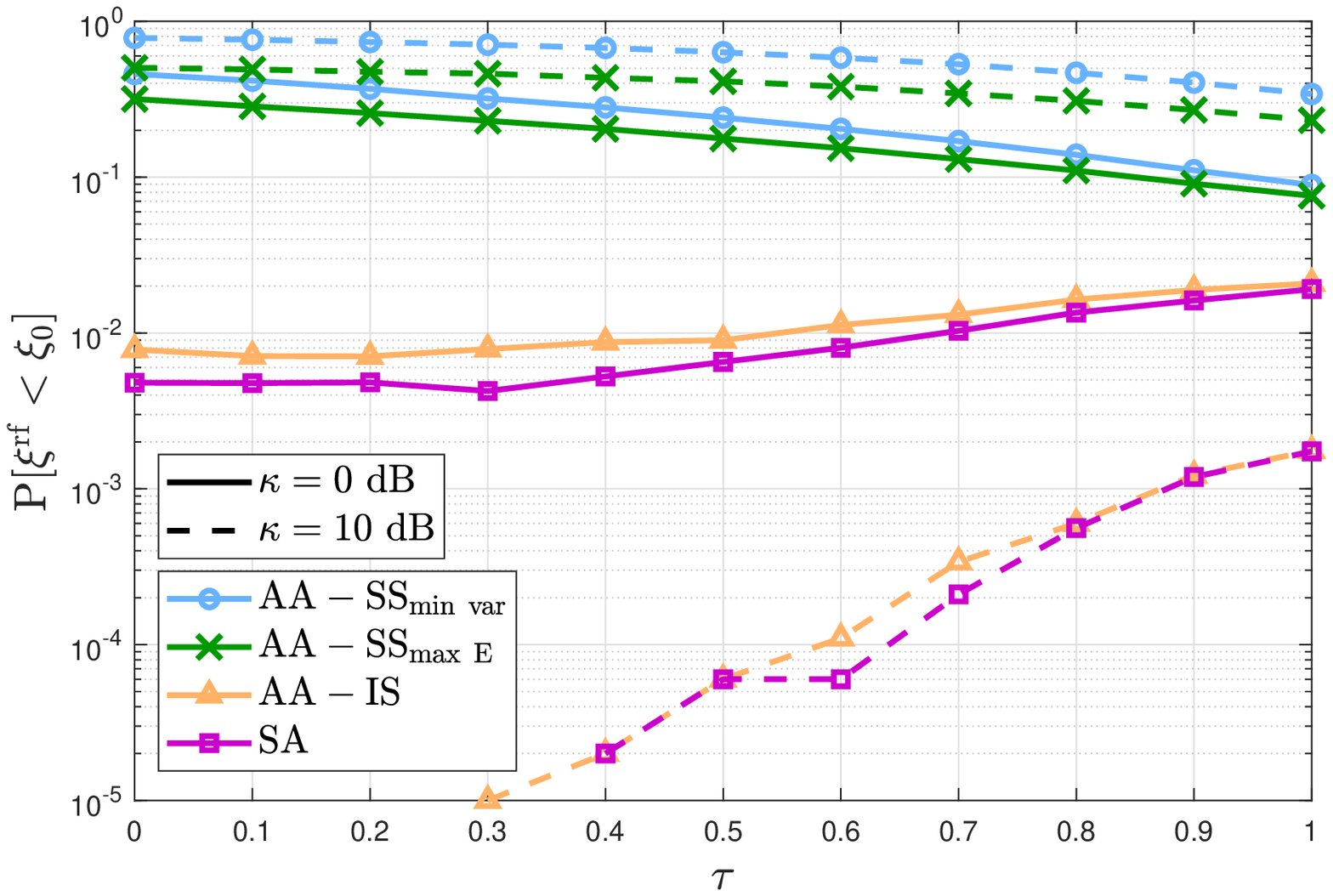}
	\caption{$a)$ Average harvested energy (top), $b)$ variance of the harvested energy (middle), and $c)$ energy outage probability (bottom), as a function of $\tau$ for $\kappa=\{0,10\}$ dB and $\beta=2$ dBm.}	
	\vspace{-2mm}
	\label{Fig7}
\end{figure}
As we already demonstrated and verified the necessity of considering the mean phase shifts in the analysis, from now on we just focus on showing the performance results under random mean phase shifts, e.g. we dispense of ``scheme'' ($\phi=0$) performance curves. 

\begin{figure}[t!]
	\centering  
	\includegraphics[width=0.9\columnwidth]{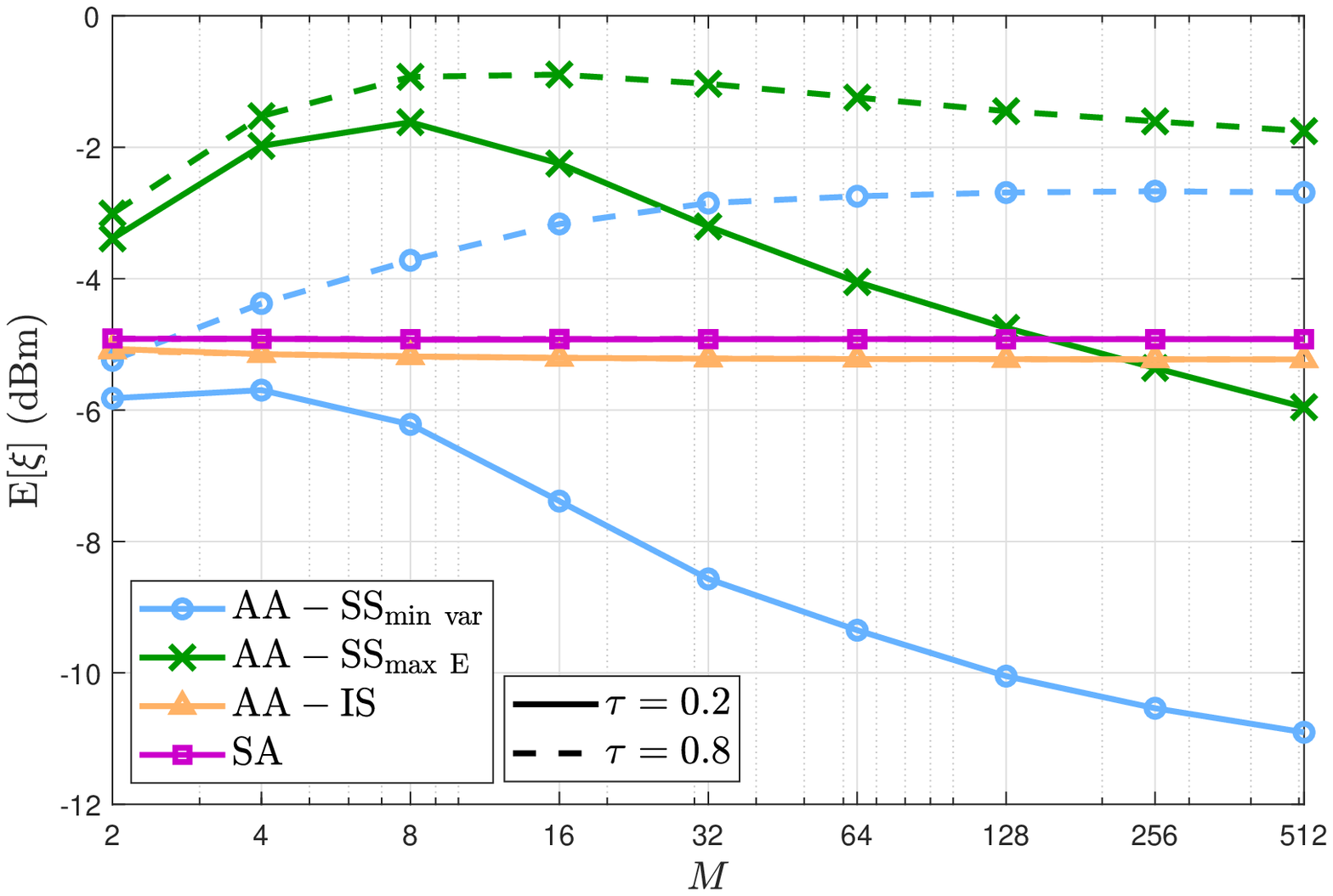}\\
	\includegraphics[width=0.9\columnwidth]{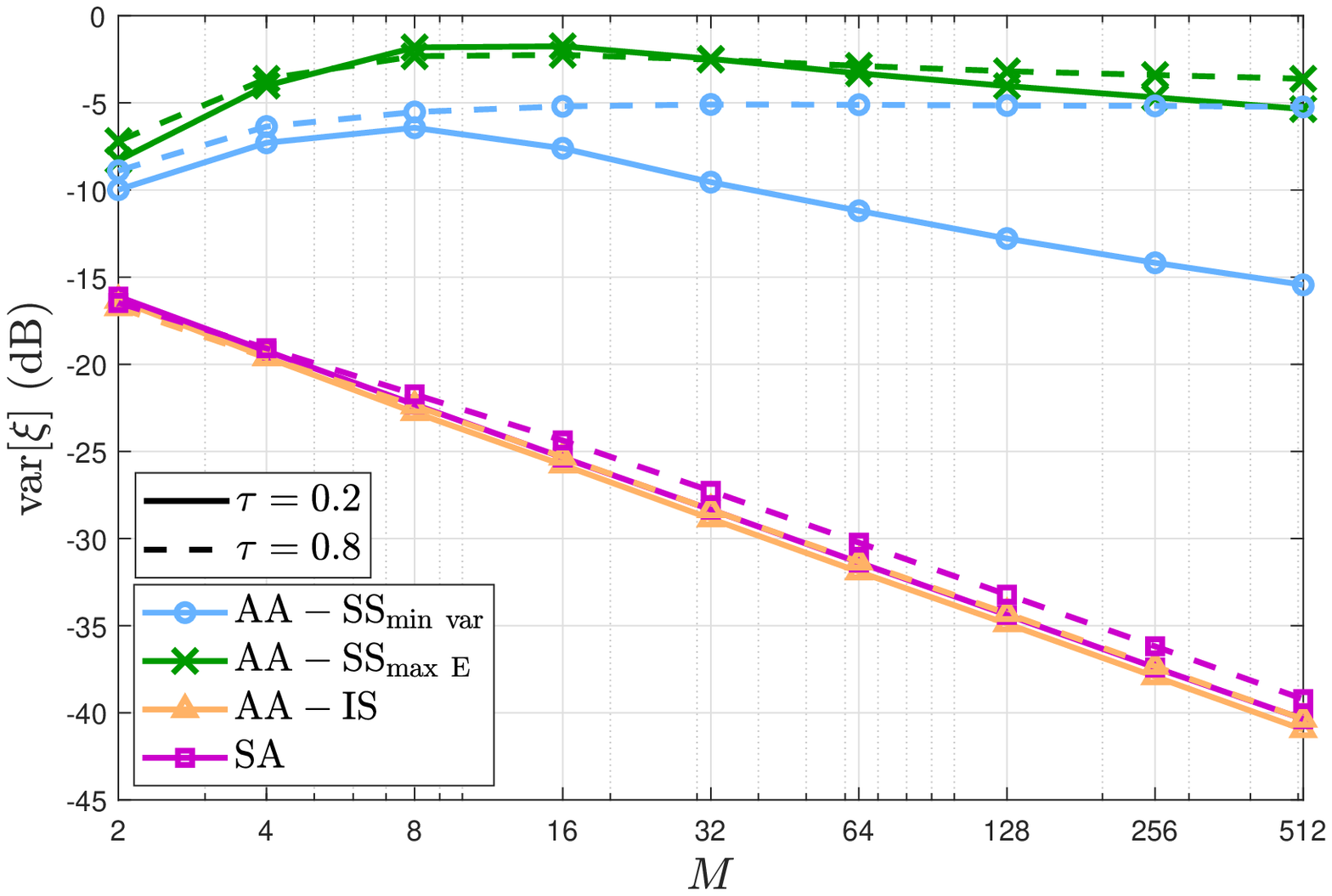}\\	
	\includegraphics[width=0.9\columnwidth]{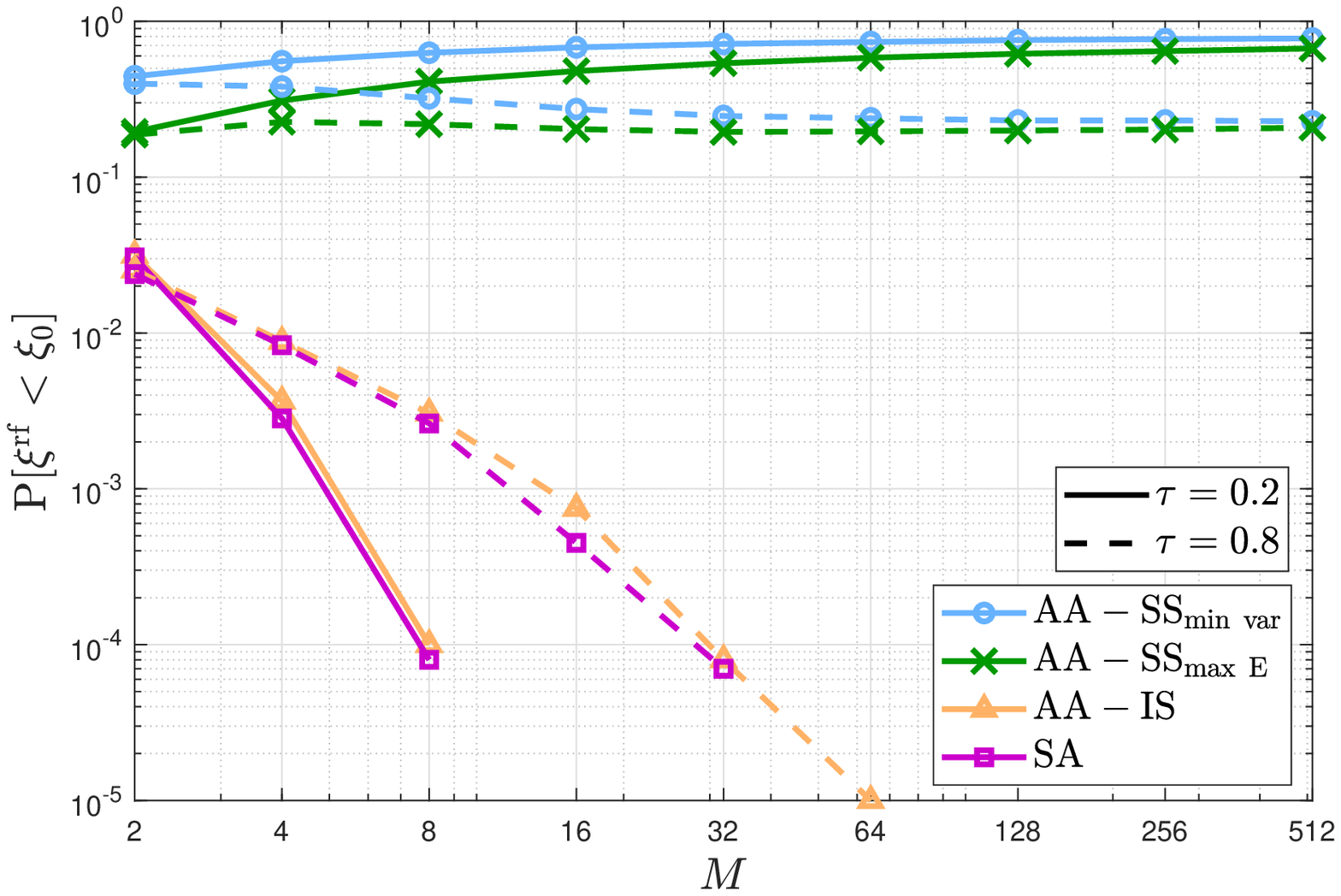}
	\caption{$a)$ Average harvested energy (top), $b)$ variance of the harvested energy (middle), and $c)$ energy outage probability (bottom), as a function of $M$ for $\tau=\{0.2,0.8\}$ and $\beta=2$ dBm.}		\vspace{-2mm}
	\label{Fig8}
\end{figure}
In Fig.~\ref{Fig7} we show the statistics (average in Fig.~\ref{Fig7}a and variance in Fig.~\ref{Fig7}b) of the harvested energy, and the energy outage probability (Fig.~\ref{Fig7}c), as a function of the exponential correlation coefficient for two LOS profiles. As evidenced, a greater correlation is beneficial under the $\mathrm{AA-SS}$ schemes, especially under poor LOS where channels are more random. A very counter-intuitive result is that LOS is not always beneficial when transmitting the same signal simultaneously through all antennas. In fact, the energy outage probability under such schemes is lower for $\kappa=0$ dB than for $\kappa=10$ dB, which also holds in case of the average harvested energy when preventive phase shifting is not used, e.g. $\mathrm{AA-SS}_\mathrm{min\ var}$, and for $\tau>0.35$ in case of $\mathrm{AA-SS}_\mathrm{max\ E}$. 
Meanwhile, since correlation (LOS) is well-known to decrease (increase) the diversity, $\mathrm{AA-IS}$ and $\mathrm{SA}$ schemes are affected by (benefited from) an increasing $\tau$ ($\kappa$) as shown in Fig.~\ref{Fig7}b and Fig.~\ref{Fig7}c. Notice however that as claimed in Remark~\ref{re6} the average statistics of the energy harvested under such schemes is not affected in any way by the correlation, and only a bit by the LOS factor due to the non-linearity of the energy harvester.
In general, $\mathrm{AA-SS}_\mathrm{max\ E}$ is the clear winner in terms of maximum average harvested energy, while $\mathrm{AA-IS}$ and $\mathrm{SA}$ schemes offer better performance in terms of energy outage.
\subsection{On the impact of the number of antennas}
\begin{figure}[t!]
	\centering  
	\includegraphics[width=0.9\columnwidth]{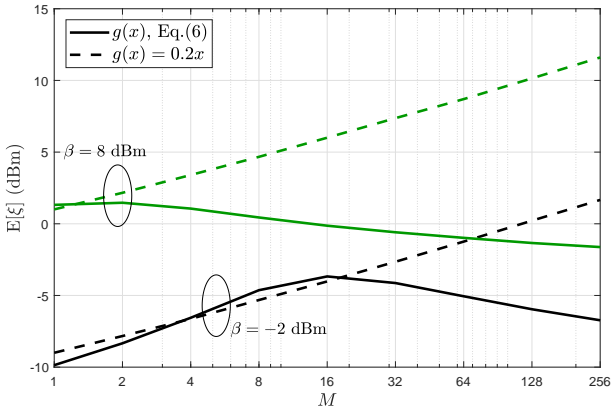}
	\caption{Average harvested energy as a function of $M$ for $\beta=\{-2,8\}$ dBm. Non-linear EH model vs linear EH model with $\eta=20\%$.}		
	\label{Fig9}
	\vspace{-3mm}
\end{figure}
Herein we investigate the impact of the number of transmit antennas on the system performance. Fig.~\ref{Fig8} show the average (Fig.~\ref{Fig8}a) and variance (Fig.~\ref{Fig8}b) of the harvested energy, and the energy outage probability (Fig.~\ref{Fig8}c) for $\tau=\{0.2,0.8\}$. As expected, the average harvested energy under $\mathrm{AA-IS}$ and $\mathrm{SA}$ remains constant as $M$ increases, while the variance and consequently the energy outage probability decrease. The latter improves as $\tau$ decreases as commented in previous subsection. Meanwhile, it is observed that a greater $M$ not always allows improvements on the system performance when transmitting the same signal simultaneously through all antennas. Interestingly, as the correlation increases a greater number of antennas is advisable, which does not occur for small $\tau$. 

\begin{figure}[t!]
	\centering  
	\ \includegraphics[width=1\columnwidth]{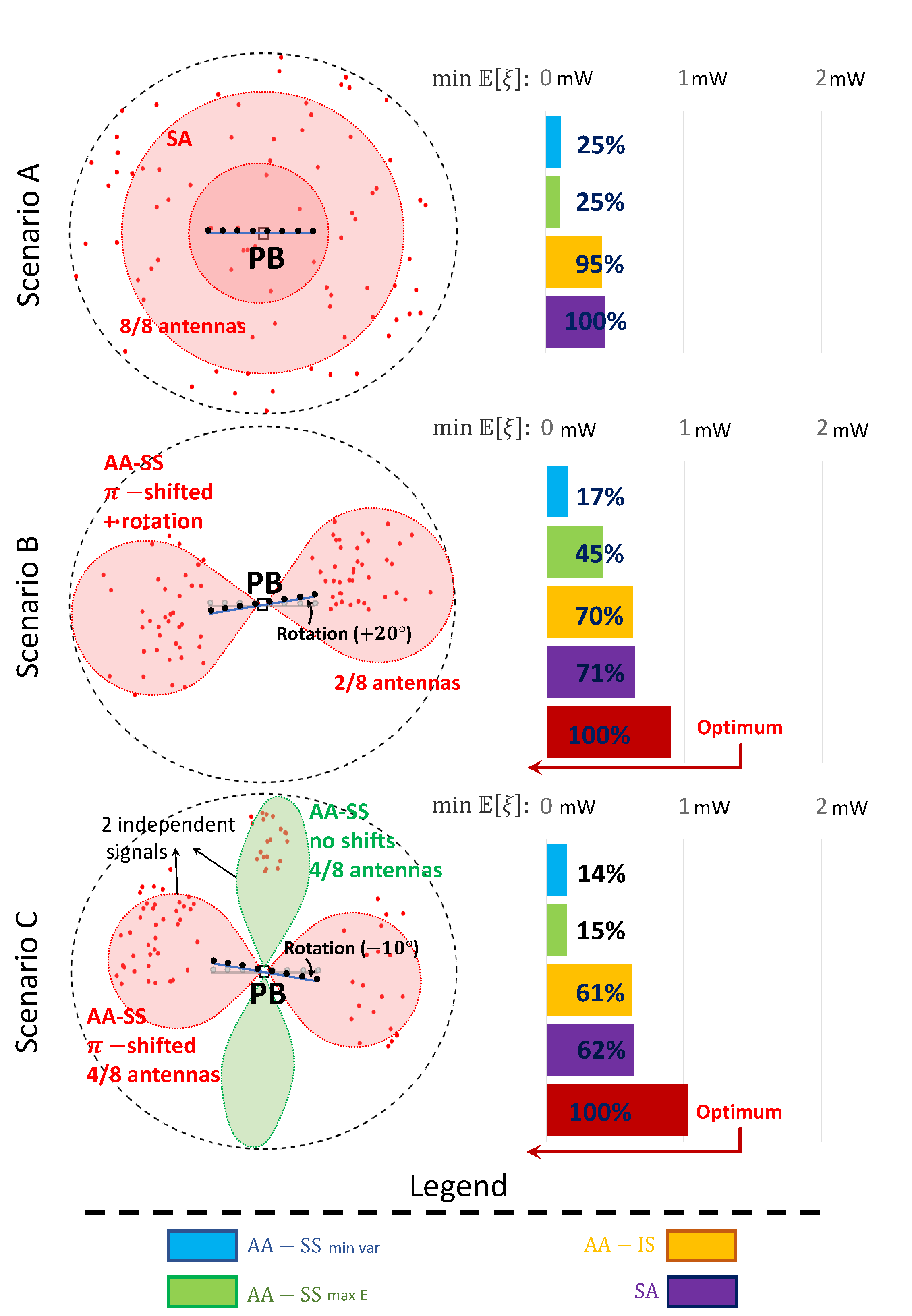}
	\vspace{-1mm}
	\caption{Devices' minimum harvested energy for three different scenarios: A) devices distributed uniformly in the area, B) devices distributed in two clusters, and C) devices distributed in three clusters. The optimum powering scheme is illustrated together with each scenario. The shown radiation patterns are by no means exhaustive and are included for reference only.}	
	\vspace{-3mm}
	\label{FigSC}
\end{figure}
Results in Fig.~\ref{Fig8} evidence that there is an optimum number of transmit antennas $M^*$. For instance, in terms of average harvested energy, $M^*=4$ ($8$) and $M^*=256$ ($16$) for $\tau=0.2$ and $\tau=0.8$, respectively, and under the operation of $\mathrm{AA-SS}_\mathrm{min\ var}$  ($\mathrm{AA-SS}_\mathrm{max\ E}$). 
Such phenomenon is mostly due to the fact that as $M$ increases, the chances of operating close to the minima of $f(\bm{\psi},\theta)$ increase for a random $\theta$ (see Remark~\ref{rr2}), while the maximums can not be fully exploited due to the non-linearity of the EH circuitry, specifically due to the saturation phenomenon.
To deepen this, we show in Fig.~\ref{Fig9} the average harvested energy as a function of $M$ for two different path loss profiles, while comparing the performance under the non-linear EH function to the one attained under the ideal linear EH model with $20\%$ of PTE\footnote{Instead of the maximum PTE we utilized a more conservative value.}. Notice that the performance grows unbounded under such ideal model, which also happens when analyzing the average incident RF power. However, under the non-linearities of the energy harvester that is not longer the case. Since for $\beta=-6$ dBm the device is likely to be operating close to saturation, the optimum number of antennas is small, e.g. $M^*=1$ or $2$, but under greater path loss, such number goes up to $M^*=16$. 
\subsection{Exploiting positioning information}\label{position}
Throughout the paper we have assumed that the EH devices are distributed uniformly around the PB, either because their locations are unknown, or their deployment distribution is nearly homogeneous. Herein, we show that even when such assumptions do not hold we can still benefit from the analyzed CSI-free powering schemes. As performance metric we adopt the devices' minimum average harvested energy, e.g., worst node's performance, hence, we aim to reveal the max-min fairest scheme for wirelessly powering the three example scenarios depicted in Fig.~\ref{FigSC}.
For all them, we assume a PB serving a $10$m$-$radius circular area where 80 EH devices\footnote{Notice that WET is usually practical at a scale of a few meters to tens of meters, thus, a $10$m$-$radius area is a valid assumption. Also, the projections towards 6G point to challenging scenarios with up to $10$ devices/$\mathrm{m}^2$ \cite{Mahmood.2019,Mahmood.2020}, thus, the considered $80$ deployed EH devices do not constitute a very large number compared to the what is expected 10 years from now. Obviously, the larger the number of EH devices is, the more beneficial the analyzed CSI-free schemes are when compared to the traditional CSI-based schemes \cite{Lopez.2019}.} are i) Scenario A: distributed uniformly in the area, ii) Scenario B: distributed in two clusters on opposite sides of the PB, and iii) Scenario C: distributed in three clusters, two clusters as in Scenario B plus another cluster shifted $\sim 90^\circ$ with respect to the previous. Finally, we model the average RF power available at certain distance $d$ from the PB as $\beta\ (\mathrm{dBm})=30-27\log_{10} d(\mathrm{m})$, which may correspond to setups where the PB's total transmit power is 1W, and channels are subject to log-distance path losses with exponent 2.7.

As evidenced by Scenario A, $\mathrm{SA}$ constitutes the fairest scheme when the nodes are distributed over the whole area, e.g., without obvious clustering patterns. Notice that although $\mathrm{AA-IS}$ attains a similar performance, $\mathrm{SA}$ is preferable since it favors the farthest devices with EH hardware operating close to the sensitivity region. Meanwhile, $\mathrm{AA-SS}_\mathrm{min\ var}$ and $\mathrm{AA-SS}_\mathrm{max\ E}$ are in clear disadvantage because of their performance minima at different angular directions as illustrated in Fig.~\ref{Fig23}. Conversely, if the devices are deployed as in Scenario B, the optimum strategy requires to rotate the PB's antenna array\footnote{This may be possible in static setups, where the task is committed to the technician/user, or in slow-varying environments, where the PB itself is equipped with rotary-motor skills and it is capable of adjusting its orientation.} $20^\circ$ counter-clockwise and then use $\mathrm{AA-SS}_\mathrm{max\ E}$.
Notice that this is more convenient than for instance rotating the antenna array $70^\circ$ clockwise and using $\mathrm{AA-SS}_\mathrm{min\ var}$ since this scheme provides narrower beams and thus powering more dispersed clusters becomes less efficient.
Also, just  2 (consecutive) out of the 8 available antennas are utilized such that the side-beams are sufficiently wide to efficiently cover all the EH devices. By doing so, the worst performing device can harvest up to 1.5 dB more energy than under the $\mathrm{SA}$ scheme. Similarly, the optimum strategy in Scenario C demands a $10^\circ$ clockwise rotation of the PB's antenna array. Meanwhile, since the devices are now grouped into clusters at $\sim 0^\circ$, $\sim 90^\circ$ and $\sim 180^\circ$, 2 independent signals can generate the required beams to power them, as illustrated in Fig.~\ref{FigSC}. Each signal is transmitted with equal power over 4 (consecutive) out of the 8 antennas. The signals are transmitted respectively according to the schemes $\mathrm{AA-SS}_\mathrm{min\ var}$ and $\mathrm{AA-SS}_\mathrm{max\ E}$ such that all the clusters can be efficiently reached. 
By doing so, the worst performing device can harvest up to 2 dB more energy than under the $\mathrm{SA}$ scheme.

All WET schemes analyzed in this paper aim at  powering massive deployments of EH devices without costly CSI acquisition overheads. Results suggest that:
\begin{itemize}
    \item $\mathrm{SA}$ and $\mathrm{AA-IS}$ are preferable when the devices' deployments are not clustered, e.g., for powering dense deployments of sensors and RFIDs in warehouse's storage areas. As highlighted in Remark~\ref{re1} and discussed in previous sections, the specific choice depends on the EH operation region of the critical EH devices, e.g. devices farthest from the PB or those with more stringent energy demands;
    \item $\mathrm{AA-SS}$ is preferable when devices are clustered in specific spatial directions; e.g., for powering parking lot sensors. 
\end{itemize}
\section{Conclusion}\label{conclusions}
In this paper, we analyzed CSI-free schemes that a dedicated multi-antenna power station can utilize when powering wirelessly a large set of single-antenna devices. Differently from our early work \cite{Lopez.2019}, such CSI-free schemes were studied in a more practical setup that takes into account the mean phase shifts between antenna elements. In addition to $\mathrm{AA-SS}$ (All Antennas transmitting the Same Signal) and $\mathrm{SA}$ (Switching Antennas) \cite{Lopez.2019} schemes, we analyzed the $\mathrm{AA-IS}$ (All Antennas transmitting Independent Signals) scheme as well. We demonstrated that those devices far from the Power Beacon (PB) and more likely to operate near their sensitivity level, benefit more from the $\mathrm{SA}$ scheme than from $\mathrm{AA-IS}$. However, those closer to the PB and more likely to operate near saturation, benefit more from $\mathrm{AA-IS}$.

We characterized the distribution of the RF energy at the EH receiver in correlated Rician fading channels under each WET scheme, and found out that while $\mathrm{AA-IS}$ and $\mathrm{SA}$ cannot take advantage of the multiple antennas to improve the average statistics of the incident RF power, the energy dispersion can be significantly reduced. Meanwhile, the gains in terms of average RF energy depend critically on the mean phase shifts between the antenna elements when using $\mathrm{AA-SS}$. 
In that regard, we show the considerable performance gaps between the idealistic $\mathrm{AA-SS}$ studied in \cite{Lopez.2019} and this scheme when considering channels with different mean phases. Even under such performance degradation $\mathrm{AA-SS}$ still provides the greatest average RF energy when compared to $\mathrm{AA-IS}$ and $\mathrm{SA}$, although its associated energy outage probability is generally the worst.
Those trade-offs make $\mathrm{AA-IS}$ and $\mathrm{SA}$ schemes suitable for powering devices under harvest-then-transmit/cooperate -like protocols, while $\mathrm{AA-SS}$ seems more appropriate for scenarios where the IoT devices are allowed to accumulate energy.

Additionally, we attained the preventive phase shifting that the power station could utilize for maximizing the average energy delivery or minimizing its dispersion for each of the schemes. Specifically, consecutive antennas must be $\pi$ phase-shifted for optimum average energy performance under $\mathrm{AA-SS}$, while under other optimization criteria and/or different schemes, there is no need of carrying out any preventive phase shifting. 
Numerical results evidenced that correlation is beneficial under $\mathrm{AA-SS}$, while a very counter-intuitive finding was that a greater LOS and/or number of antennas was not always beneficial under such WET scheme. 	 
Meanwhile, both $\mathrm{AA-IS}$ and $\mathrm{SA}$ schemes benefit from small correlation, and large number of antennas and LOS. 
Finally, we showed that $\mathrm{AA-SS}$ ($\mathrm{SA}$ and $\mathrm{AA-IS}$) is (are) the fairest when devices are (are not) clustered in specific spatial directions.
All these are fundamental results that can be used when designing practical WET systems.

\appendices 

\section{Proof of Theorem~\ref{the1}}\label{App-A}
Departing from \eqref{aassf} and using \eqref{hxy}, the RF power available as input to the energy harvester obeys
\begin{align}
\xi^{\mathrm{rf}}_{\mathrm{aa-ss}}&=\frac{\beta}{M}\big|\bm{1}^T\mathbf{h}_{x}+\mathbbm{i}\bm{1}^T\mathbf{h}_{y}\big|^2=\frac{\beta}{M}\Big[\big(\bm{1}^T\mathbf{h}_{x}\big)^2+\big(\bm{1}^T\mathbf{h}_{y}\big)^2\Big]\nonumber\\
&\stackrel{(a)}{=}\frac{\beta R_{\scriptscriptstyle\sum}}{2(\kappa+1)M}\big(\theta_x^2+\theta_y^2\big)\nonumber\\
&\stackrel{(b)}{\sim}\!\!\frac{\beta R_{\scriptscriptstyle\sum}}{2(\kappa\!+\!1)M}\chi^2\Big(\!2,\!\frac{2\kappa\big(\upsilon_1(\bm{\psi},\phi)^2\!+\!\upsilon_2(\bm{\psi},\phi)^2\big)}{R_{\scriptscriptstyle\sum}}\!\Big), 
\end{align}
where $R_{\scriptscriptstyle\sum}=\bm{1}^T\mathbf{R}\ \!\bm{1}$, while $\upsilon_1$ and $\upsilon_2$ are given in \eqref{v1v2}.
Notice that $(a)$ comes from setting $\theta_{x,y}^2=\frac{2(\kappa+1)}{ R_{\scriptscriptstyle\sum}}\big(\bm{1}^T\mathbf{h}_{x,y}\big)^2$, where $\theta_{x,y}\sim \mathcal{N}\Big(\sqrt{\frac{\kappa}{ R_{\scriptscriptstyle\sum}}}(\upsilon_1\pm\upsilon_2,1\Big)$ since $\bm{1}^T\mathbf{h}_{x,y}$ is still a Gaussian RV, where the mean is equal to the sum of the means of each element of vector $\mathbf{h}_{x,y}$, while the variance equals the sum of the elements of covariance matrix $\frac{1}{2(\kappa+1)}\mathbf{R}$ \cite{Novosyolov.2006}. Therefore, the mean and variances are $\sqrt{\frac{\kappa}{2(\kappa+1)}}(\upsilon_1\mp\upsilon_2)$ and $\frac{ R_{\scriptscriptstyle\sum}}{2(\kappa+1)}$, respectively. 
Then, $(b)$ comes from using the definition of a non-central chi-squared RV \cite[Cap.2]{Proakis.2001} along with simple algebraic transformations. Finally,  after using \eqref{fv} we attain \eqref{AA-SS}.  \hfill 	\qedsymbol
\section{Proof of Theorem~\ref{the2}}\label{App-B}
Based on \eqref{aaisf}, the RF energy available at $S$ under the $\mathrm{AA-IS}$ operation is given by
\begin{align}
\xi_\mathrm{aa-is}^\mathrm{rf}&=\frac{\beta}{M}\big(\mathbf{h}_x^T\mathbf{h}_x+\mathbf{h}_y^T\mathbf{h}_y\big),\label{xi_aa-is} %\frac{\beta}{M}||\mathbf{h}||^2=
\end{align}
for which we analyze its distribution as follows.
Let us define $\mathbf{z}_{x,y}\!=\!\sqrt{2(\kappa+1)}\mathbf{R}^{-1/2}\Big(\mathbf{h}_{x,y}\!-\!\sqrt{\frac{\kappa}{2(\kappa\!+\!1)}}\bm{\omega}_{x,y}\Big)$ which is distributed as $\mathcal{N}\big(\mathbf{0},\mathbf{I}\big)$, then
\begin{align}
\mathbf{h}_{x,y}&=\frac{1}{\sqrt{2(\kappa+1)}}\mathbf{R}^{1/2}\mathbf{z}_{x,y}+\sqrt{\frac{\kappa}{2(\kappa+1)}}\bm{\omega}_{x,y}
\end{align}
\begin{align}
&\mathbf{h}_{x,y}^T\mathbf{h}_{x,y}\nonumber\\
&=\frac{1}{2(\kappa+1)}\big(\mathbf{R}^{1/2}\mathbf{z}_{x,y}+\sqrt{\kappa}\bm{\omega}_{x,y}\big)^T\big(\mathbf{R}^{1/2}\mathbf{z}_{x,y}+\sqrt{\kappa}\bm{\omega}_{x,y}\big)\nonumber\\
&=\!\frac{1}{2(\kappa\!+\!1)}\!\big(\mathbf{z}_{x,y}\!+\!\mathbf{R}^{-1/2}\!\sqrt{\kappa}\bm{\omega}_{x,y}\big)^{\!T}\!\mathbf{R}\big(\mathbf{z}_{x,y}\!+\!\mathbf{R}^{-1/2}\!\sqrt{\kappa}\bm{\omega}_{x,y}\big),\label{hh}
\end{align}
where last step comes from simple algebraic transformations. Notice that
$\mathbf{R}=\mathbf{Q}\mathbf{\Lambda}\mathbf{Q}^T$
is the spectral decomposition of $\mathbf{R}$ \cite[Ch.21]{Harville.2008}, where $\mathbf{\Lambda}$ is a diagonal matrix containing the eigenvalues of $\mathbf{R}$, and $\mathbf{Q}$ is a matrix whose column vectors are the orthogonalized eigenvectors of $\mathbf{R}$. In order to find the eigenvalues, $\{\lambda_j\}$, of $\mathbf{R}$, we require solving $\mathrm{det}\big(\mathbf{R}-\lambda\mathbf{I}\big)=0$ for $\lambda$, which is analytical intractable for a general matrix $\mathbf{R}$. 
However, there is analytical tractability for the special case of uniform spatial correlation. Notice that for the special case of uniformly spatial correlated fading such that the antenna elements are correlated between each other with coefficient $\rho$ ($R_{i,j}=\rho,\ \forall i\ne j$), we have that
	\begin{align}\label{delta2}
	R_{\scriptscriptstyle\sum}=M\big(1+(M-1)\rho\big).
	\end{align}
	Also, in order to guarantee that $\mathbf{R}$ is positive semidefinite and consequently a viable covariance matrix, $\rho$ is lower bounded by $-\frac{1}{M-1}$ \cite{Chen.2015}, thus $-\frac{1}{M-1}\le \rho\le 1$ and $0\le R_{\scriptscriptstyle\sum}\le M^2$. Then, under such correlation model  we have that \cite[Eq.(32)]{Lopez.2019}
\begin{align}\label{L}
\mathbf{\Lambda}=\mathrm{diag}\big(1-\rho,\cdots,1-\rho,1+(M-1)\rho\big),
\end{align}
\begin{figure*}[th!]	
	\begin{align}\label{PP}
	\mathbf{Q}^T=\left[\begin{smallmatrix}
	-\frac{1}{\sqrt{1\times 2}}& 0 &  0 &\ldots& 0& 0 & 0 & \frac{1}{\sqrt{1\times 2}}\\ 
	-\frac{1}{\sqrt{2\times 3}} & 0& 0 &\ldots &0 & 0& \frac{2}{\sqrt{2\times 3}} & -\frac{1}{\sqrt{2\times 3}}\\
	-\frac{1}{\sqrt{3\times 4}} & 0& 0 &\ldots &0 & \frac{3}{\sqrt{3\times 4}}& -\frac{1}{\sqrt{3\times 4}} & -\frac{1}{\sqrt{3\times 4}}\\
	-\frac{1}{\sqrt{4\times 5}} & 0& 0 &\ldots &\frac{4}{\sqrt{4\times 5}} & -\frac{1}{\sqrt{4\times 5}}& -\frac{1}{\sqrt{4\times 4}} & -\frac{1}{\sqrt{4\times 5}}\\
	\vdots & \vdots& \vdots &\ddots &\vdots & \vdots& \vdots & \vdots\\
	-\frac{1}{\sqrt{(M-2)(M-1)}} & 0& \frac{M-2}{\sqrt{(M-2)(M-1)}} &\ldots &-\frac{1}{\sqrt{(M-2)(M-1)}} & -\frac{1}{\sqrt{(M-2)(M-1)}}& -\frac{1}{\sqrt{(M-2)(M-1)}} & -\frac{1}{\sqrt{(M-2)(M-1)}}\\
	-\frac{1}{\sqrt{(M-1)M}} & \frac{M-1}{\sqrt{(M-1)M}}& -\frac{1}{\sqrt{(M-1)M}} &\ldots &-\frac{1}{\sqrt{(M-1)M}} & -\frac{1}{\sqrt{(M-1)M}}& -\frac{1}{\sqrt{(M-1)M}} & -\frac{1}{\sqrt{(M-1)M}}\\
	\frac{1}{\sqrt{M}} & \frac{1}{\sqrt{M}}& \frac{1}{\sqrt{M}} & \ldots & \frac{1}{\sqrt{M}}&\frac{1}{\sqrt{M}}& \frac{1}{\sqrt{M}}& \frac{1}{\sqrt{M}}
	\end{smallmatrix}\right].\\%, 
	\bottomrule\nonumber
	\end{align}
	\vspace{-8mm}
\end{figure*}
thus, matrix $\mathbf{R}$ is characterized by two different eigenvalues: $1-\rho$, which is independent of the number of antennas but has multiplicity $M-1$, and $1+(M-1)\rho$, whose multiplicity is $1$ but increases linearly with $M$.
Then, matrix $\mathbf{Q}^T$ can be written as shown at the top of the next page. Such representation can be verified for any $M$ by using specialized software for matrix processing such as Matlab or Wolfram.

Now, by substituting $\mathbf{R}=\mathbf{Q}\bm{\Lambda}\mathbf{Q}^T$ into \eqref{hh} while setting $\zeta_{x,y}\!=\!2(\kappa\!+\!1) \mathbf{h}_{x,y}^T\mathbf{h}_{x,y}$ yields
\begin{align}
\zeta_{x,y}&=\!\Big(\mathbf{z}_{x,y}\!+\!\sqrt{\kappa}\big(\mathbf{Q}\mathbf{\Lambda}\mathbf{Q}^T\!\big)^{\!-\frac{1}{2}}\bm{\omega}_{x,y}\Big)^T\!\mathbf{Q}\mathbf{\Lambda}\mathbf{Q}^T\!\times\nonumber\\
&\qquad\qquad\qquad\qquad\times\Big(\mathbf{z}_{x,y}\!+\!\sqrt{\kappa}\big(\mathbf{Q}\mathbf{\Lambda}\mathbf{Q}^T\!\big)^{\!-\frac{1}{2}}\bm{\omega}_{x,y}\Big)\nonumber\\
&\stackrel{(a)}{=}\Big(\mathbf{Q}^T\!\mathbf{z}_{x,y}\!+\!\sqrt{\kappa}\mathbf{\Lambda}^{\!-\frac{1}{2}}\mathbf{Q}^T\bm{\omega}_{x,y}\Big)^T\!\mathbf{\Lambda}\times\nonumber\\
&\qquad\qquad\qquad\qquad\times\Big(\mathbf{Q}^T\!\mathbf{z}_{x,y}\!+\!\sqrt{\kappa}\mathbf{\Lambda}^{\!-\frac{1}{2}}\mathbf{Q}^T\bm{\omega}_{x,y}\Big)\nonumber\\
&\stackrel{(b)}{=}\Big(\mathbf{P}_{x,y}+\sqrt{\kappa}\mathbf{u}_{x,y}\Big)^T\mathbf{\Lambda}\Big(\mathbf{P}_{x,y}+\sqrt{\kappa}\mathbf{u}_{x,y}\Big)\nonumber\\
&\stackrel{(c)}{=}\sum_{j=1}^{M}\lambda_{j}\Big(c_j+\sqrt{\kappa}{u_{\!x\!,y}}_j\Big)^2,\label{hhn}
\end{align}
where $(a)$ comes after some algebraic transformations, $(b)$ follows from taking $\mathbf{P}_{x,y}=\mathbf{Q}^T\mathbf{z}_{x,y}\sim\mathcal{N}\big(\mathbf{0},\mathbf{I}\big)$ and 
\begin{align}\label{d}
\mathbf{u}_{x,y}&=\mathbf{\Lambda}^{-\frac{1}{2}}\mathbf{Q}^T\bm{\omega}_{x,y}\nonumber\\
&=\mathbf{\Lambda}^{-\frac{1}{2}}\left[\begin{smallmatrix}
\frac{1}{\sqrt{2}}(\omega_M-\omega_1) \\
\frac{1}{\sqrt{2\times 3}}(2\omega_{M-1}-\omega_1-\omega_M) \\
\frac{1}{\sqrt{3\times 4}}\big(3\omega_{M-2}-\omega_1-\sum_{j=M-1}^{M}\omega_{j}\big) \\
\frac{1}{\sqrt{4\times 5}}\big(4\omega_{M-3}-\omega_1-\sum_{j=M-2}^{M}\omega_{j}\big) \\
\vdots \\
\frac{1}{\sqrt{(M-2)(M-1)}}\big((M-2)\omega_{3}-\omega_1-\sum_{j=4}^{M}\omega_{j}\big) \\
\frac{1}{\sqrt{(M-1)M}}\big((M-1)\omega_{2}-\omega_1-\sum_{j=3}^{M}\omega_{j}\big) \\
\frac{1}{\sqrt{M}}\sum_{j=1}^M \omega_j
\end{smallmatrix}\right],\\
{u_{\!x\!,y}}_j&\!=\!\left\{ \begin{array}{rl}
\frac{j\omega_{M-j+1}\!-\!\omega_1\!-\!\sum_{t=M-j+2}^{M}\omega_t}{\sqrt{j(j+1)\lambda_j}}, &   \mathrm{for}\ \  j\!\le\!M\!-\!1 \\
\frac{1}{\sqrt{M\lambda_M}}\sum_{t=1}^{M}\omega_t, &   \mathrm{for}\ \  j=M  
\end{array} 
\right.\!\!\!,\label{uxy}
\end{align}	
which come from using \eqref{L} and \eqref{PP}; % \eqref{v}. % and \eqref{x}
%and finally 
while $(c)$ follows after taking $c_j\sim\mathcal{N}\big(0,1\big)$. 

By incorporating these results into \eqref{xi_aa-is} we obtain the distribution of $\xi_\mathrm{aa-is}^\mathrm{rf}$ as follows
\begin{align}\label{aa-is}
&\xi_\mathrm{aa-is}^\mathrm{rf}\sim \frac{\beta}{2M(\kappa+1)}(\zeta_x+\zeta_y)\nonumber\\
\ &\sim \frac{\beta}{2M(\kappa+1)}\sum_{j=1}^{M}\lambda_j\Big(\big(c_j+\sqrt{\kappa}{u_{\!x}}_j\big)^2+\big(\tilde{c}_j+\sqrt{\kappa}{u_{\!y}}_j\big)^2\Big)\nonumber\\
&\sim\frac{\beta}{2M(\kappa+1)}\bigg((1-\rho)\chi^2\Big(2(M-1),\frac{2\kappa\tilde{\upsilon}(\bm{\psi},\phi)}{1-\rho}\Big)+\nonumber\\
\ &\qquad +\big(1\!+\!(M\!-\!1)\rho\big)\chi^2\Big(2,\frac{\kappa f(\bm{\psi},\phi)}{M\big(1\!+\!(M\!-\!1)\rho\big)}\Big)\bigg) 
\end{align}
since $\tilde{c}_j\sim \mathcal{N}(0,1)$, 
\begin{align}
{u_{\!x}}_M^2+{u_{\!y}}_M^2&\!=\!\frac{1}{M\lambda_M}\bigg(\Big(\sum_{t=1}^{M}{\omega_x}_t\Big)^2+\Big(\sum_{t=1}^{M}{\omega_y}_t\Big)^2\bigg)\nonumber\\
&\!=\!\!\frac{\big(\upsilon_1(\bm{\psi},\phi)\!-\!\upsilon_2(\bm{\psi},\phi)\big)^2\!\!\!+\!\big(\upsilon_1(\bm{\psi},\phi)\!+\!\upsilon_2(\bm{\psi},\phi)\big)^2}{M\big(1+(M-1)\rho\big)}\nonumber\\
&\!=\!\frac{2f(\bm{\psi},\phi)}{M\big(1+(M-1)\rho\big)},
\end{align}
\begin{figure*}[th!]
\vspace{-6mm}
	\small
	\begin{align}
	\tilde{\upsilon}(\bm{\psi},\phi)&\stackrel{(a)}{=}\sum_{j=1}^{M-1}\frac{1}{2j(j+1)}\bigg(\Big(j{\omega_x}_{M-j+1}-{\omega_x}_1-\sum_{t=M-j+2}^{M}{\omega_x}_t\Big)^2+\Big(j{\omega_y}_{M-j+1}-{\omega_y}_1-\sum_{t=M-j+2}^{M}{\omega_y}_t\Big)^2\bigg)\nonumber\\
	&\stackrel{(b)}{=}\sum_{j=1}^{M-1}\frac{1}{2j(j+1)}\bigg(j^2\big({\omega_x}_{M-j+1}^2+{\omega_y}_{M-j+1}^2\big)+2+ \Big(\sum_{t=M-j+2}^{M}{\omega_x}_t\Big)^2+ \Big(\sum_{t=M-j+2}^{M}{\omega_y}_t\Big)^2-2j\big({\omega_x}_{M-j+1}+{\omega_y}_{M-j+1}\big)+\nonumber\\
	&\qquad\qquad\qquad\qquad\qquad\qquad\qquad\qquad -2j\Big({\omega_x}_{M-j+1}\sum_{t=M-j+2}^{M}{\omega_x}_t+{\omega_y}_{M-j+1}\sum_{t=M-j+2}^{M}{\omega_y}_t\Big)+2\sum_{t=M-j+2}^{M}\big({\omega_x}_t+{\omega_y}_t\big)\bigg)\nonumber\\
	&\stackrel{(c)}{=}\sum_{j=1}^{M-1}\frac{1}{j(j+1)}\bigg( \Big(\sum_{t=M-j+1}^{M-1}\cos\big(\psi_t+\Phi_t\big) \Big)^2+\Big(\sum_{t=M-j+1}^{M-1}\sin\big(\psi_t+\Phi_t\big) \Big)^2+j^2+1-2j\cos\big(\psi_{M-j}+\Phi_{M-j}\big)+\nonumber\\
	&\qquad-\!2j\Big(\!\cos\big(\psi_{M\!-j}+\Phi_{M\!-j}\big)\!\!\sum_{t\!=\!M\!-\!j\!+\!1}^{M-1}\!\!\cos \big(\psi_t+\Phi_t\big)\!+\!\sin\big(\psi_{M\!-j\!}+\Phi_{M\!-j\!}\big)\!\!\sum_{t\!=\!M\!-\!j\!+\!1}^{M-1}\!\!\sin \big(\psi_t+\Phi_t\big)\!\Big)\!+\!2\!\!\sum_{t\!=\!M\!-\!j\!+\!1}^{M-1}\!\!\cos \big(\psi_t+\Phi_t\big)\!\bigg).\label{upp1}\\
	\bottomrule\nonumber\vspace{-10mm}
	\end{align}
\end{figure*}
\begin{figure}[t!]
	\centering  \includegraphics[width=0.9\columnwidth]{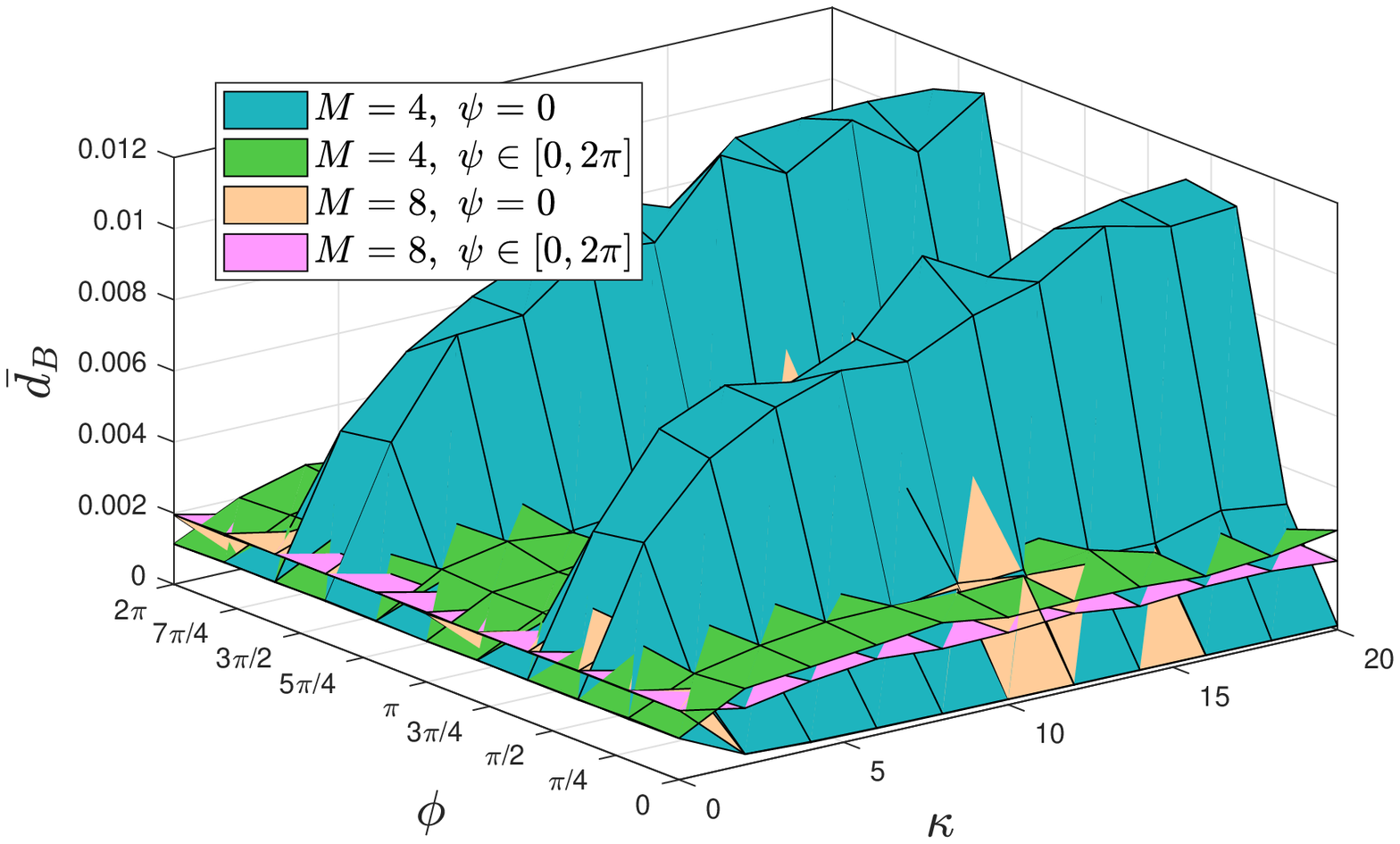}
	\vspace{-2mm}
   \includegraphics[width=1.05\columnwidth]{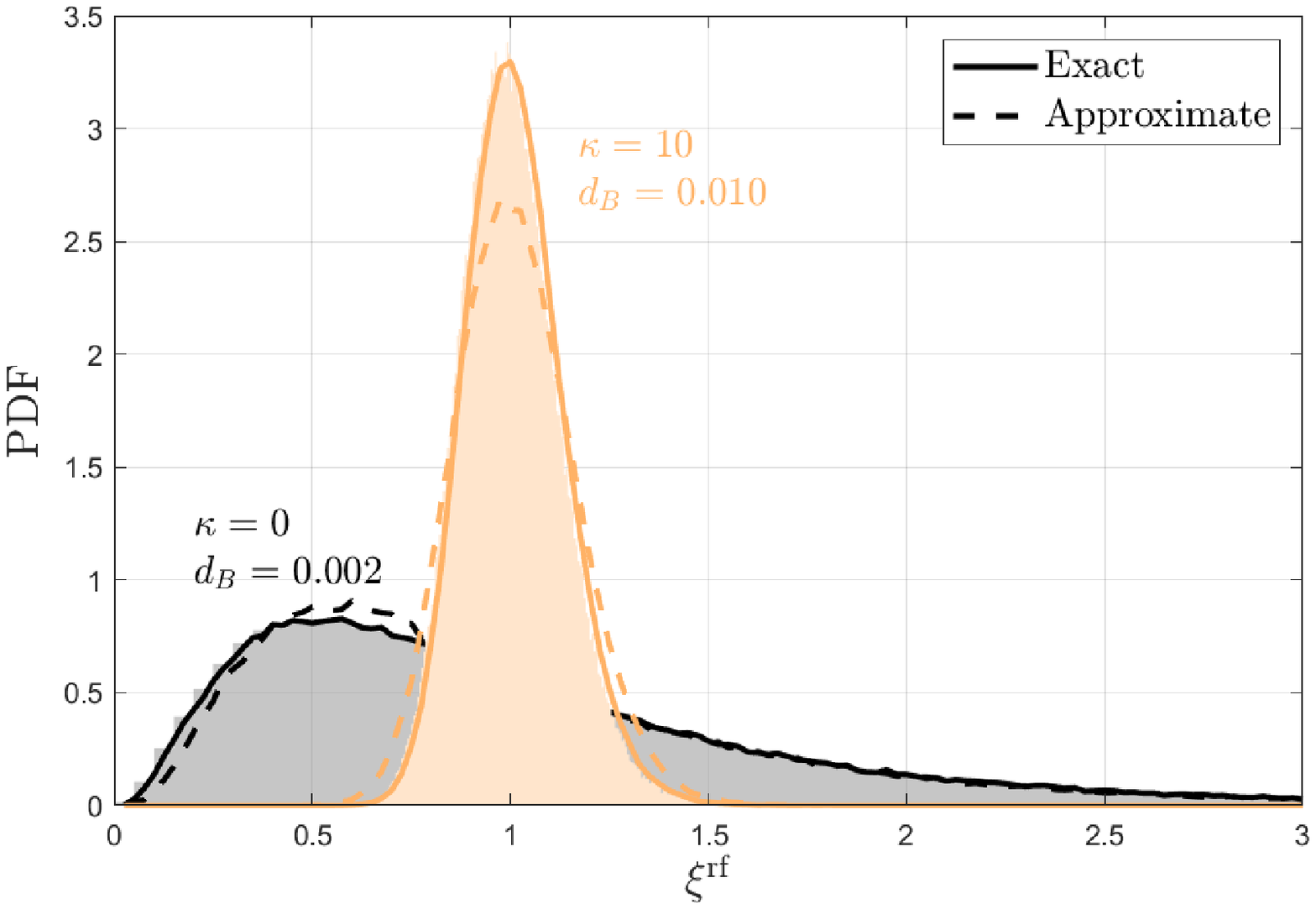}
	\vspace{-6mm}
	\caption{$a)$ Average Bhattacharyya distance between $\mathbf{\hat{p}_1}$ and $\mathbf{\hat{p}_2}$ as a function of $\kappa$ and $\phi$ for $M\in\{4,8\}$, $\bm{\psi}=\bm{0}$ (no preventive phase shifting) and $\bm{\psi}$ taken uniformly random from $[0,2\pi]^{M-1}$ (top). $b)$ Monte Carlo-based comparison between the exact $\mathbf{p_2}$ and proposed approximate $\mathbf{p_1}$ distributions of the incident RF power under the $\mathrm{AA-IS}$ scheme. 
		We set $M=4$, $\phi=3\pi/4$ and $\bm{\psi}=\bm{0}$. The comparison is carried out for two different cases: i) $\kappa=0$, $\bar{d}_B=0.002$; and ii) $\kappa=10$, $\bar{d}_B=0.010$ (bottom).}
	\label{FigApp}
\end{figure}
while $\tilde{\upsilon}=\frac{1}{2}\sum_{j=1}^{M-1}\lambda_{j}\big({u_{\!x}}_j^2+{u_{\!y}}_j^2\big)$, which appears expanded as a function of $\bm{\psi}$ and $\bm{\Phi}$ in \eqref{upp1} at the top of the next page and it is obtained by $(a)$ using \eqref{uxy}, $(b)$ expanding the quadratic binomials, and $(c)$ performing some algebraic simplifications by taking advantage of $\sin^2a+\cos^2a=1$.  Then, we attain \eqref{up1} by regrouping terms and performing further algebraic simplifications by taking advantage of $\cos a\cos b+\sin a\sin b=\cos(a-b)$.

Notice that \eqref{aa-is} holds under the assumption of uniform spatial correlation; however, given that for the $\mathrm{AA}$ scheme we were able of writing the distribution merely as a function of parameter $ R_{\scriptscriptstyle\sum}$, which is not linked to any specific kind of correlation, we can expect that the behavior under the $\mathrm{AA-IS}$ scheme depends, at least approximately, on $ R_{\scriptscriptstyle\sum}$ rather on the specific entries of matrix $\mathbf{R}$. In fact, such hypothesis has been shown to be accurate under the scenario of LOS components with equal mean phases studied in \cite{Lopez.2019}. To explore this, we substitute $\rho=\frac{ R_{\scriptscriptstyle\sum}-M}{M(M-1)}$ coming from \eqref{delta2}, into \eqref{aa-is}, such that we attain \eqref{aa-is2}.  \hfill 	\qedsymbol
\subsection{Validation}
 To evaluate the accuracy of \eqref{aa-is2} we utilize the Bhattacharyya distance metric \cite{Kailath.1967}, which measures the similarity of two probability distributions $\mathbf{p_1}$ and $\mathbf{p_2}$, and it is given by
\begin{align}\label{db}
d_B(\mathbf{p_1},\mathbf{p_2})&=-\ln\big(c_B(\mathbf{p_1},\mathbf{p_2})\big),
\end{align}
where $c_B$ is the Bhattacharyya coefficient \cite{Kailath.1967}. In the case of our interest, both probability distributions characterize $\xi_{\mathrm{aa-is}}^\mathrm{rf}$; however, for $\mathbf{p_1}$ we assume a uniform spatial correlation matrix $\mathbf{R}$, thus we may use \eqref{aa-is2} which is exact in such scenario; while for $\mathbf{p_2}$ we assume a randomly generated correlation matrix $\mathbf{R}^*$ such that $R_{\scriptscriptstyle\sum}=R_{\scriptscriptstyle\sum}^*$, thus, we evaluate the distribution directly from \eqref{xi_aa-is}. 
We utilize the histogram formulation for estimating $\mathbf{p_1}$ and $\mathbf{p_2}$. Specifically, we estimate $\mathbf{\hat{p}_1}=\{\hat{p}_{1,i}\}_{i=1,\cdots,m}$ (with $\sum_{i=1}^{m}\hat{p}_{1,i}=1$) and $\mathbf{\hat{p}_2}=\{\hat{p}_{2,i}\}_{i=1,\cdots,m}$ (with $\sum_{i=1}^{m}\hat{p}_{2,i}=1$), where $m$ is the number of histogram bins.  
Then, Bhattacharyya coefficient is calculated as \cite{Kailath.1967}
\begin{align}\label{cb}
c_B(\mathbf{\hat{p}_1},\mathbf{\hat{p}_2})=\sum_{i=1}^{m}\sqrt{\hat{p}_{1,i}\hat{p}_{2,i}}.
\end{align}
By substituting \eqref{cb} into \eqref{db} we calculate the similarity between distributions. Note that according to \eqref{aa-is2} we have that $d_B(\mathbf{p_1},\!\mathbf{p_2}\!)\in\![0,\infty\!)$, where $0$ corresponds to the case when $\mathbf{p_1}=\mathbf{p_2}$.

Fig.~\ref{FigApp}a shows the average Bhattacharyya distance, $\bar{d}_B$, as a function of $\kappa$ and $\phi$ for $M\in\{4,8\}$, $\bm{\psi}=\bm{0}$ (no preventive phase shifting) and $\bm{\psi}$ taken uniformly random from $[0,2\pi]^{M-1}$ to account for different possible preventive phase shiftings. We generated 1000 random correlation matrices and averaged over the Bhattacharyya distance for the distributions corresponding to each of them. 
We utilized $2\times 10^5$ distribution samples and set $m=240$, while the histogram edges were uniformly chosen between $0$ and $6$, which is an appropriate range since, without loss of generality, we used $\beta=1$. As observed in the figure, the largest difference between both distributions is when $\bm{\psi}=\bm{0}$,  $M$ is small, e.g. $M=4$, and $\phi\approx \pi/4 \pm \pi/2$. In such scenario $\bar{d}_B$ increases with $\kappa$; however, according to further extensive simulations although $d_B$ values become close to $0.012$ they never surpass such limit. Meanwhile, for other parameter configurations $\bar{d}_B$ becomes significantly smaller.

We select $\bar{d}_B=0.010$ and $\bar{d}_B=0.002$ to illustrate  the cases of respectively large and small differences between the considered distributions in Fig.~\ref{FigApp}b. Notice that the approximate distributions indeed approach the exact ones, even with greater accuracy for smaller $d_B$ as expected. Finally, since $\bar{d}_B=0.010$ is a very pessimistic assumption as it is only reachable for few values of $\phi$, results in  Fig.~\ref{FigApp}b validate the accuracy of \eqref{aa-is2} in general scenarios.

\bibliographystyle{IEEEtran}
\bibliography{IEEEabrv,references}
\end{document}